\newtheorem{thm}{Theorem}[section]
\newtheorem{prop}[thm]{Proposition}
\theoremstyle{definition}
\newtheorem{defn}{Definition}[section]
\newtheorem{ex}[defn]{Example}
\theoremstyle{remark}
\newtheorem{rem}{Remark}
\begin{document}

\title{Galois equiangular tight frames from Galois self-dual codes}
	
\author{Junmin An\\Department of Mathematics\\Sogang University, Seoul, Korea\\ {\tt junmin0518@sognag.ac.kr}\\
\\Jon-Lark Kim \\ Department of Mathematics \\ Sogang University, Seoul, Korea \\
		{\tt jlkim@sogang.ac.kr } \\
	}
\date{\today}
	
\maketitle

\begin{abstract}
Greaves et al. (2022) extended frames over real or complex numbers to frames over finite fields. In this paper, we study the theory of frames over finite fields by incorporating the Galois inner products introduced by Fan and Zhang (2017), which generalize the Euclidean and Hermitian inner products. We define a class of frames, called Galois frames over finite fields, along with related notions such as Galois Gram matrices, Galois frame operators, and Galois equiangular tight frames (Galois ETFs). We also characterize when Galois self-dual codes induce Galois ETFs. Furthermore, we construct explicitly Galois ETFs from Galois self-dual constacyclic codes.
\end{abstract}

{\bf{Keywords:}} : equiangular tight frame, Galois inner product, self-dual code

{\bf{2008 MSC Classification}} : 51E22, 94B05

\section{Introduction}
Frame theory~\cite{P:13}, originally developed in the context of signal processing by Duffin and Schaeffer~\cite{D:52}, has become one of central topics in mathematics, computer science, and engineering. Unlike a basis for a vector space which requires a linear independence, a frame for a space provides a redundant, yet spanning set that allows for flexible representations of vectors. It is particularly advantageous in settings where robustness to noise, erasures, or perturbations are required which makes frames especially useful in applications such as signal processing, image reconstruction, and data compression.

A set of vectors is called a tight frame if every vector of a space can be expressed as a linear combination of the frame elements such that each element contributes equally to the total squared norm. An equiangular frame is a finite collection of equal-norm vectors in which the norm of the inner product between any two distinct vectors is constant. When such a frame is also tight, it is called an equiangular tight frame (ETF). ETFs have been studied extensively due to their connections to optimal line packing, graph theory~\cite{W:09}, and combinatorial design~\cite{F:18}. In finite dimensions, the Gram matrix of an ETF exhibits highly structured algebraic properties, making it a rich object of study in both pure and applied mathematics such as discrete mathematics and information theory~\cite{R:23}.

Greaves et el.~\cite{G22:1}~\cite{G22:2} have studied frame theory over finite fields, extending the classical notion of equiangular tight frames to discrete settings. In this context, vector spaces over finite fields are equipped with analogues of the standard inner product, and corresponding equiangular systems are constructed and studied. These configurations yield Gram matrices that exhibit properties similar to those in the real and complex cases. Notably, they have provided finite field analogues of Gerzon's bound. Moreover, the resulting structures reveal connections to strongly regular graphs, finite classical groups, and incidence geometries. This lines of researches contribute to the profound understanding of frame structures over finite fields, and have also found applications in areas such as combinatorial design theory~\cite{IKM:21}.

Shi et al.~\cite{S:23} have studied connections between frame theory over finite fields and coding theory through constructions of frames derived from self-dual codes. They have used self-dual codes over finite fields to generate sets of vectors whose pairwise inner products satisfy equiangularity conditions. This perspective provides a bridge between the algebraic structure of linear codes and the geometric configuration of frames, enriching the study of both areas. The resulting constructions suggest new avenues for applying finite fields based frame theory within coding-theoretic contexts, particularly in relation to the orthogonality condition inherent in self-dual codes.

Shi et al.~\cite{S:23} proposed extending their results to other inner products as a direction for future research.  This paper follows that direction by generalizing their results to the $\ell$-Galois inner product, which is a generalization of the Euclidean and Hermitian inner products. We extend frame theory over finite fields by incorporating the $\ell$-Galois inner product and define the notions of $\ell$-Galois frames, $\ell$-Galois Gram matrices, and $\ell$-Galois equiangular tight frames (Galois ETFs). We establish fundamental properties of those concepts and investigate the relationship between $\ell$-Galois self-dual codes and $\ell$-Galois ETFs. In particular, we characterize when a self-dual code with respect to the $\ell$-Galois inner product gives rise to an $\ell$-Galois ETF. Since the $\ell$-Galois inner product has been primarily studied in the context of constacyclic codes~\cite{FL:24}~\cite{FZ17}~\cite{FL:21}~\cite{FL:23}, we provide examples of constructions of $\ell$-Galois equiangular tight frames from $\ell$-Galois self-dual constacyclic codes.

This paper consists of six sections. Section 2 discusses preliminaries on codes and Galois inner products.
Section 3 introduces $\ell$-Galois frames over finite fields as a generalization of frames over finite fields. In Section 4, we relate Galois frames to Galois self-dual codes. Section 5 gives several non-trivial numerical examples of Galois ETFs constructed from Galois self-dual constacyclic codes whose generator matrices are found by running Magma. We conclude our paper in Section 6.

\section{Preliminaries}
Let $q=p^e$ where $p$ is a prime and $e$ is a nonnegative integer. Let $\mathbb{F}_q$ denote the finite field with $q$ elements. A \textit{linear code} $\mathcal{C}$ of length $n$ over $\mathbb{F}_q$ is defined as a subspace of the vector space $\mathbb{F}_q^n=\{(x_1, x_2, \ldots, x_n)|x_i\in \mathbb{F}_q\}$. The elements of $\mathcal{C}$ are called \textit{codewords}.

Given two vectors $\mathbf{x}_1$, $\mathbf{x}_2$ in $\mathbb{F}_q^n$, the \textit{Hamming distance} between them is defined as the number of coordinates in which they differ. The \textit{minimum distance} of a code $\mathcal{C}$ is the smallest Hamming distance between any pair of distinct codewords in $\mathcal{C}$. A linear code $\mathcal{C}$ of length $n$, dimension $k$, and minimum distance $d$ is called an $[n, k, d]$ code.

Given an inner product $\langle\, , \,\rangle : \mathbb{F}_q^n\times \mathbb{F}_q^n\to\mathbb{F}_q$ (typically Euclidean or Hermitian inner product), the \textit{dual code} of a code $\mathcal{C}$, denoted by $\mathcal{C}^\perp$, is defined as the orthogonal complement of $\mathcal{C}$ with respect to the given inner product:
\[
\mathcal{C}^\perp=\{\mathbf{x}\in\mathbb{F}_q^n\,|\,\langle\mathbf{x}, \mathbf{c}\rangle=\{\mathbf{0}\}\mbox{ for all }\mathbf{c}\in\mathcal{C}\}.
\]
We say that a code $\mathcal{C}$ is \textit{self-orthogonal} if $\mathcal{C}\subseteq\mathcal{C}^\perp$, and \textit{self-dual} if $\mathcal{C}=\mathcal{C}^\perp$. The \textit{hull} of a linear code $\mathcal{C}$, denoted by $\mbox{Hull}(\mathcal{C})$, is defined as the intersection $\mathcal{C}\cap\mathcal{C}^\perp$. In particular, a code $C$ is self-orthogonal if $\mbox{Hull}(\mathcal{C})=\mathcal{C}$, and is called a \textit{linear complementary dual(LCD)} code if $\mbox{Hull}(\mathcal{C})=0$.

Understanding the hull of a code plays a crucial role in the classification of self-orthogonal and LCD codes. In particular, the rank of the matrix product $GG^t$, where $G$ is a generator matrix of a code $\mathcal{C}$ and $G^t$ is the transpose of $G$, provides an efficient tool for determining the dimension of the hull.

\begin{thm}{\rm{(\cite{C:19}})}\label{hulld}
	Let $C$ be a linear code of dimension $k$ over $\mathbb{F}_q$ with generator matrix $G$. Then the dimension of ${\rm{Hull}}(C)$ is given by
    \[
    h=k-{\rm{rank}}(GG^t).
    \]
\end{thm}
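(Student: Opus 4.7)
The plan is to transport the problem from $C$ to the coefficient space $\mathbb{F}_q^k$ via the generator matrix, rewrite the orthogonality condition as a matrix equation, and then apply rank--nullity.

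First I would record that, since $G$ has $k$ linearly independent rows, the map $\varphi:\mathbb{F}_q^k\to C$ given by $\varphi(x)=xG$ (with $x$ viewed as a row vector) is a vector space isomorphism. In particular, every element of $C$ is uniquely of the form $xG$, so it suffices to describe the preimage $\varphi^{-1}(\mathrm{Hull}(C))$ inside $\mathbb{F}_q^k$ and compute its dimension.

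Next I would translate membership in the hull. Since $\varphi(x)=xG$ lies in $C$ by construction, the only nontrivial requirement is $xG\in C^{\perp}$, i.e.\ $\langle xG, yG\rangle=0$ for every $y\in\mathbb{F}_q^k$. With respect to the Euclidean inner product this reads $xG(yG)^{t}=xGG^{t}y^{t}=0$ for all $y\in\mathbb{F}_q^k$, which (letting $y$ range over the standard basis) is equivalent to the single linear condition $xGG^{t}=0$. Hence
\[
\varphi^{-1}(\mathrm{Hull}(C))=\{x\in\mathbb{F}_q^k : xGG^{t}=0\},
\]
the left null space of the $k\times k$ matrix $GG^{t}$.

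Finally I would invoke the rank--nullity theorem applied to the linear map $x\mapsto xGG^{t}$ on $\mathbb{F}_q^k$: its image has dimension $\mathrm{rank}(GG^{t})$ and its kernel therefore has dimension $k-\mathrm{rank}(GG^{t})$. Since $\varphi$ is an isomorphism, this kernel dimension equals $\dim\mathrm{Hull}(C)$, giving the desired formula $h=k-\mathrm{rank}(GG^{t})$.

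There is no real obstacle here; the only point that requires mild care is the reduction ``$xGG^{t}y^{t}=0$ for all $y$ $\Longleftrightarrow$ $xGG^{t}=0$'', which holds because the Euclidean pairing on $\mathbb{F}_q^k$ is nondegenerate and so a row vector that annihilates every column vector must itself be zero. Everything else is bookkeeping through the isomorphism $\varphi$.
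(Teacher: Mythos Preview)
Your argument is correct and is the standard proof of this fact: pull back to $\mathbb{F}_q^k$ via the generator matrix, rewrite the hull condition as the left null space of $GG^t$, and apply rank--nullity. Each step is sound, including the reduction from ``$xGG^t y^t=0$ for all $y$'' to ``$xGG^t=0$'' via nondegeneracy (or, as you also note, by testing against the standard basis).

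There is no proof in the paper to compare against: the theorem is quoted from \cite{C:19} and stated without proof, so your write-up simply supplies the omitted standard argument.
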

As immediate consequences, $\mathcal{C}$ is self-orthogonal if and only if $\operatorname{rank}(GG^t) = 0$, and $\mathcal{C}$ is an LCD code if and only if $\operatorname{rank}(GG^t) = k$.

To generalize classical inner products over finite fields, consider the following family of field automorphisms. For $0\le \ell\le e-1$, define a map $\sigma_\ell:\mathbb{F}_q\to\mathbb{F}_q$ by
\[
\sigma_\ell(c)=c^{p^\ell}\mbox{ for all }c\in \mathbb{F}_q.
\]
It is straightforward that $\sigma_\ell$ is an automorphism of $\mathbb{F}_q$. Using this, Fan and Zhang \cite{FZ17} introduced the \textit{$\ell$-Galois inner product} $\langle\,,\,\rangle_\ell:\mathbb{F}_q^n\times\mathbb{F}_q^n\to \mathbb{F}_q$ defined by
\[
\langle\mathbf{x}, \mathbf{y}\rangle_\ell=\sum_{i}^nx_i\sigma_\ell(y_i)
\]
for $\mathbf{x}=(x_1, x_2, \ldots, x_n), \mathbf{y}=(y_1, y_2, \ldots, y_n) \in \mathbb{F}_q^n$.
This inner product generalizes both the Euclidean inner product (when $\ell$=0) and the Hermitian inner product. (when $\ell=\frac{e}{2}$ for an even integer $e$)

Given the $\ell$-Galois inner product, the corresponding \textit{$\ell$-Galois dual code} of a linear code $\mathcal{C} \subseteq \mathbb{F}_q^n$ is defined by
\[
\mathcal{C}^{\perp_\ell}=\{\mathbf{x}\in\mathbb{F}_q^n|\langle \mathbf{x}, \mathbf{c}\rangle_\ell=0\mbox{ for all }\mathbf{c}\in\mathcal{C}\}.
\]
Analogously to the standard inner product, we say that $\mathcal{C}$ is \textit{$\ell$-Galois self-orthogonal} if $\mathcal{C} \subseteq \mathcal{C}^{\perp_\ell}$, and \textit{$\ell$-Galois self-dual} if $\mathcal{C} = \mathcal{C}^{\perp_\ell}$. The $\ell$-Galois dual interacts with field automorphisms as described in the following proposition.
\begin{prop}{\rm{(\cite{H:19})}}
	Let $\mathcal{C}$ be a code over $\mathbb{F}_q^n$ where $q=p^e$. Then the following holds:
	\begin{enumerate}
	\item $\mathcal{C}^{\perp_\ell}=(\sigma_{e-\ell}(\mathcal{C}))^{\perp_0}=\sigma_{e-\ell}(\mathcal{C}^{\perp_0)}$.
	\item $(\mathcal{C}^{\perp_\ell})^{\perp_f}=\sigma_{2e-\ell-f}(\mathcal{C})$, for every $0\le \ell, f\le e-1$.
	\end{enumerate}
\end{prop}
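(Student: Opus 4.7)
The proof rests on a single structural fact: Frobenius has order dividing $e$ on $\mathbb{F}_q$, so $\sigma_\ell$ and $\sigma_{e-\ell}$ are mutually inverse automorphisms of $\mathbb{F}_q$, since $c^{q}=c$ gives $\sigma_\ell\circ\sigma_{e-\ell}=\sigma_e=\mathrm{id}$. Extended componentwise, each $\sigma_r$ is a bijection of $\mathbb{F}_q^n$ that carries linear codes to linear codes, and, being a field homomorphism, it preserves polynomial identities over $\mathbb{F}_q$: in particular $\sum a_ib_i=0$ implies $\sum\sigma_r(a_i)\sigma_r(b_i)=0$.

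For \textbf{part (1)}, I would establish both equalities by unfolding definitions. Starting from $\mathbf{x}\in\mathcal{C}^{\perp_\ell}$, which reads $\sum x_i\sigma_\ell(c_i)=0$ for every $\mathbf{c}\in\mathcal{C}$, applying the automorphism $\sigma_{e-\ell}$ to each such scalar equation produces $\sum\sigma_{e-\ell}(x_i)\,c_i=0$ (using $\sigma_{e-\ell}\circ\sigma_\ell=\mathrm{id}$). This means $\sigma_{e-\ell}(\mathbf{x})\in\mathcal{C}^{\perp_0}$, equivalently $\mathbf{x}\in\sigma_\ell(\mathcal{C}^{\perp_0})$; the reverse direction is symmetric because $\sigma_{e-\ell}$ is a bijection. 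To pass to the Frobenius-twisted Euclidean form, I would then invoke the general commutation $(\sigma_r(\mathcal{D}))^{\perp_0}=\sigma_r(\mathcal{D}^{\perp_0})$: if $\sum y_id_i=0$ for all $\mathbf{d}\in\mathcal{D}$, applying $\sigma_r$ gives $\sum\sigma_r(y_i)\sigma_r(d_i)=0$, so $\sigma_r(\mathbf{y})\in(\sigma_r(\mathcal{D}))^{\perp_0}$, and the reverse inclusion follows by applying $\sigma_r^{-1}=\sigma_{e-r}$. Combining the two observations (with $r=e-\ell$ and $\mathcal{D}=\mathcal{C}^{\perp_0}$) yields both equalities in the statement.

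For \textbf{part (2)}, the double dual follows by iterating part (1). Using $\mathcal{C}^{\perp_\ell}=(\sigma_{e-\ell}(\mathcal{C}))^{\perp_0}$ from part (1), together with nondegeneracy of the Euclidean inner product on $\mathbb{F}_q^n$ (so $((\,\cdot\,)^{\perp_0})^{\perp_0}=\mathrm{id}$ on linear codes), I would compute
\[
(\mathcal{C}^{\perp_\ell})^{\perp_f}=\sigma_{e-f}\!\bigl((\mathcal{C}^{\perp_\ell})^{\perp_0}\bigr)=\sigma_{e-f}\!\bigl(\bigl((\sigma_{e-\ell}(\mathcal{C}))^{\perp_0}\bigr)^{\perp_0}\bigr)=\sigma_{e-f}\bigl(\sigma_{e-\ell}(\mathcal{C})\bigr),
\]
and then collapse via $\sigma_r\circ\sigma_s=\sigma_{r+s}$ to obtain $\sigma_{(e-f)+(e-\ell)}(\mathcal{C})=\sigma_{2e-\ell-f}(\mathcal{C})$. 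The argument is essentially bookkeeping; the only delicate point is tracking Frobenius indices consistently when moving $\sigma_r$ across $^{\perp_0}$ and when composing, so that the final exponent comes out to $2e-\ell-f$. The main obstacle, minor though it is, is simply avoiding index confusion.
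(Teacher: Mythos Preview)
The paper provides no proof of this proposition; it is cited from Liu--Pan without argument, so there is nothing to compare your approach against. Your method---pushing the Frobenius twist through the Euclidean dual via the commutation $(\sigma_r(\mathcal{D}))^{\perp_0}=\sigma_r(\mathcal{D}^{\perp_0})$ and then iterating for the double dual---is the standard and correct one.

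However, you have glossed over an index mismatch that your own calculation exposes. You correctly derive $\sigma_{e-\ell}(\mathbf{x})\in\mathcal{C}^{\perp_0}$, hence $\mathbf{x}\in\sigma_\ell(\mathcal{C}^{\perp_0})$. Under the paper's conventions $\langle\mathbf{x},\mathbf{y}\rangle_\ell=\sum_i x_i y_i^{p^\ell}$ and $\mathcal{C}^{\perp_\ell}=\{\mathbf{x}:\langle\mathbf{x},\mathbf{c}\rangle_\ell=0\ \forall\,\mathbf{c}\in\mathcal{C}\}$, this yields $\mathcal{C}^{\perp_\ell}=(\sigma_\ell(\mathcal{C}))^{\perp_0}=\sigma_\ell(\mathcal{C}^{\perp_0})$, \emph{not} the $\sigma_{e-\ell}$ version printed in the proposition; correspondingly part (2) collapses to $\sigma_{\ell+f}(\mathcal{C})$ rather than $\sigma_{2e-\ell-f}(\mathcal{C})$. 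The two are genuinely different---the span of $(1,\alpha)$ in $\mathbb{F}_{p^3}^2$ with $\alpha$ primitive and $\ell=1$ already separates them. The printed indices are what one obtains when the Frobenius sits on the first argument of the pairing (equivalently, when the dual is defined by $\langle\mathbf{c},\mathbf{x}\rangle_\ell=0$), which is presumably the convention in the cited source. Your part (2) then silently reverts to the printed $e-\ell$ indices even though your own part (1) argument produced $\ell$, so the chain is not internally consistent. Do not assert that your computation ``yields both equalities in the statement'' when it does not: either flag the apparent transcription slip in the proposition, or redo the argument under the alternate convention so that the indices actually line up.
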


The notion of the hull of a code is naturally extended to the $\ell$-Galois inner product setting. The \textit{$\ell$-Galois hull} of a linear code $\mathcal{C}$, denoted by $\mbox{Hull}_\ell(\mathcal{C})$, is defined as the intersection $\mathcal{C}\cap\mathcal{C}^{\perp_\ell}$. The dimension of $\mbox{Hull}_\ell(C)$ is denoted by $h_\ell(\mathcal{C})$. In particular, $\mathcal{C}$ is said to be \textit{$\ell$-Galois self-orthogonal} if $\operatorname{Hull}_\ell(\mathcal{C}) = \mathcal{C}$, and an \textit{$\ell$-Galois LCD code} if $\operatorname{Hull}_\ell(\mathcal{C}) = \{ \mathbf{0} \}$. The dimension of the $\ell$-Galois hull can be efficiently computed from the generator matrix of $\mathcal{C}$, in analogy with Theorem~\ref{hulld} for the standard inner product.

\begin{thm}{\rm{(\cite{H:19})}}\label{ghull}
	Let $\mathcal{C}$ be a linear code of dimension $k$ over $\mathbb{F}_q$ with generator matrix $G$. Then the dimension of ${\rm{Hull}}_\ell(C)$ is given by 
    \[
    h_\ell(\mathcal{C})=k-{\rm{rank}}(G\cdot\sigma_\ell(G)^t)
    \]
    where $\sigma_\ell(G)^t$ denotes the transpose of the matrix obtained by applying $\sigma_\ell$ entry-wise to $G$.
\end{thm}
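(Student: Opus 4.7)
The plan is to reduce the dimension of $\mathrm{Hull}_\ell(\mathcal{C})$ to a rank computation by exploiting the linear isomorphism between $\mathbb{F}_q^k$ and $\mathcal{C}$ induced by $G$. Since any $\mathbf{c}\in\mathcal{C}$ can be written uniquely as $\mathbf{c}=\mathbf{u}G$ for some $\mathbf{u}\in\mathbb{F}_q^k$, and $\mathbf{u}\mapsto\mathbf{u}G$ is an $\mathbb{F}_q$-linear bijection, it suffices to characterize those $\mathbf{u}$ for which $\mathbf{u}G\in\mathcal{C}^{\perp_\ell}$ and then apply rank-nullity.

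The first step is to translate the Galois orthogonality condition into a single matrix equation. A codeword $\mathbf{c}=\mathbf{u}G$ lies in $\mathcal{C}^{\perp_\ell}$ if and only if $\langle\mathbf{c},\mathbf{r}_i\rangle_\ell=0$ for every row $\mathbf{r}_i$ of $G$, since those rows span $\mathcal{C}$. Expanding the Galois inner product coordinate-wise gives
\[
\langle\mathbf{u}G,\mathbf{r}_i\rangle_\ell=\sum_{j=1}^n(\mathbf{u}G)_j\,\sigma_\ell(r_{ij})=(\mathbf{u}G)\cdot\sigma_\ell(\mathbf{r}_i)^t,
\]
so gathering these $k$ conditions into a single equation yields $\mathbf{u}\in\mathrm{Hull}_\ell(\mathcal{C})$ (under the identification with $\mathcal{C}$) iff $\mathbf{u}\bigl(G\,\sigma_\ell(G)^t\bigr)=\mathbf{0}$, i.e.\ iff $\mathbf{u}$ lies in the left kernel of the $k\times k$ matrix $M:=G\,\sigma_\ell(G)^t$.

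The second step is to count dimensions. Since $\mathbf{u}\mapsto\mathbf{u}G$ is injective, it restricts to an isomorphism between the left kernel of $M$ and $\mathrm{Hull}_\ell(\mathcal{C})$. Applying rank-nullity to the linear map $\mathbf{u}\mapsto\mathbf{u}M$ on $\mathbb{F}_q^k$ gives
\[
h_\ell(\mathcal{C})=\dim\ker(\cdot\,M)=k-\mathrm{rank}(M)=k-\mathrm{rank}\bigl(G\cdot\sigma_\ell(G)^t\bigr),
\]
which is the claim.

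There is no real obstacle here; the only place one must be careful is in step one, where one has to verify that the entry-wise application of $\sigma_\ell$ commutes properly with transposition so that $\sigma_\ell(\mathbf{r}_i)^t$ is genuinely the $i$-th column of $\sigma_\ell(G)^t$, and that the Galois inner product is $\mathbb{F}_q$-linear in the left argument (so that the reduction to testing only against the generating rows of $G$ is valid, even though $\sigma_\ell$ is only semilinear in the right argument). Both of these are immediate from the definition of $\langle\,,\,\rangle_\ell$, so the proof reduces to the routine matrix manipulation above.
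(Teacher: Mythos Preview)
Your argument is correct and is essentially the standard proof of this fact. Note, however, that the paper does not supply its own proof of this theorem: it is quoted from \cite{H:19} and stated without proof, so there is nothing in the paper to compare against. Your write-up would serve perfectly well as a self-contained justification; the only minor quibble is that the parenthetical about linearity is slightly misphrased---the reduction to testing against the rows $\mathbf{r}_i$ relies on the fact that $\langle\mathbf{c},\cdot\rangle_\ell$ is $\sigma_\ell$-semilinear (hence sends $0$ to $0$ under linear combinations), not on linearity in the left argument---but you clearly have the right idea and the computation is sound.
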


Let $R_{n, \lambda}$ be the quotient ring of the polynomial ring $\mathbb{F}_q[X]$ with respect to the ideal generated by $X^n-\lambda$:
\[
R_{n, \lambda}=\mathbb{F}_q[X]/\langle X^n-\lambda\rangle.
\]
Then, any ideal $\mathcal{C}\le R_{n, \lambda}$ is called a \textit{$\lambda$-constacyclic code} of length $n$ over $\mathbb{F}_q$.
A criterion for the existence of an $\ell$-Galois self-dual $\lambda$-constacyclic code was established in~\cite{FZ17}, as stated in the following theorem.
\begin{thm}{\rm{(\cite{FZ17})}}\label{consta}
    An $\ell$-Galois self-dual $\lambda$-constacyclic code over $\mathbb{F}_q$ of length $n$ exists if and only if $\textup{ord}_{\mathbb{F}_q^*}(\lambda)$ divides $\gcd(p^\ell+1, p^e-1$) and one of the following holds:
    \begin{enumerate}
        \item $p=2$ and $\nu_2(n)\ge 1$;
        \item $p=1+2^vu$ with $v\ge 2$ and $2\nmid u$, both $n'$ and $\textup{ord}_{\mathbb{F}_q^*}(\lambda)$ are even;
        \item $p=-1+2^vu$ with $v\ge 2$ and $2\nmid u$, all of $n'$, $\textup{ord}_{\mathbb{F}_q^*}(\lambda)$, $e$, and $h$ are even;
        \item $p=-1+2^vu$ with $v\ge 2$ and $2\nmid u$, both $n'$ and $\textup{ord}_{\mathbb{F}_q^*}(\lambda)$ are even, at least one of $e$ and $h$ are odd, and $\nu_2(n'r)>v$,
    \end{enumerate}
    where $\nu_p(n)$ is the $p$-adic valuation of $n$ and $n'$ is the largest divisor of $n$ which is relatively prime to $p$, so that $n = p^{\nu_p(n)} n'$.
\end{thm}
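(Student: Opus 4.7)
The plan is to translate the existence question into a combinatorial condition on the irreducible factorization of $X^n-\lambda$ and then perform a $2$-adic case analysis. Since a $\lambda$-constacyclic code is an ideal of $R_{n,\lambda}$, it has the form $\langle g(X)\rangle$ for a divisor $g(X)$ of $X^n-\lambda$. Using the fact that the $\ell$-Galois dual is $\sigma_{e-\ell}$ applied to the Euclidean dual, a short computation shows that the $\ell$-Galois dual of a $\lambda$-constacyclic code is itself constacyclic, with shift constant $\lambda^{-p^{e-\ell}}$. Self-duality therefore forces $\lambda^{-p^{e-\ell}} = \lambda$; combined with $\lambda^{p^e-1} = 1$ this simplifies to $\lambda^{p^\ell+1} = 1$, i.e.\ $\operatorname{ord}_{\mathbb{F}_q^*}(\lambda) \mid \gcd(p^\ell+1,\, p^e-1)$, which recovers the first condition of the theorem.

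Next, write $n = p^{\nu_p(n)} n'$ with $\gcd(n',p) = 1$, so that $X^n - \lambda = (X^{n'} - \mu)^{p^{\nu_p(n)}}$ where $\mu \in \mathbb{F}_q$ is the unique $p^{\nu_p(n)}$-th root of $\lambda$, and $X^{n'}-\mu$ is separable. I would factor $X^{n'}-\mu$ into monic irreducible polynomials over $\mathbb{F}_q$, each corresponding to a Frobenius orbit of roots in $\overline{\mathbb{F}_q}$. The $\ell$-Galois duality induces an involution $f \mapsto f^\star$ on these factors, obtained on roots by composing inversion with a Frobenius power. Pairing each $f$ with $f^\star$ in the factorization of a candidate generator $g$ then shows that an $\ell$-Galois self-dual $\lambda$-constacyclic code exists if and only if every $\star$-fixed irreducible factor of $X^n-\lambda$ occurs with even total multiplicity. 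Because the multiplicity of every irreducible factor of $X^n-\lambda$ is exactly $p^{\nu_p(n)}$, this condition is automatic as soon as $p^{\nu_p(n)}$ is even, which is precisely case $(1)$. In odd characteristic one instead needs $X^{n'}-\mu$ to have no $\star$-fixed irreducible factor at all.

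For odd $p$, write $p = \pm 1 + 2^v u$ with $v \ge 2$ and $2 \nmid u$. I would apply lifting-the-exponent-type identities to compute $\nu_2(p^\ell+1)$ and $\nu_2(p^e-1)$, and translate the absence of $\star$-fixed factors into the unsolvability of a simultaneous order system for the roots of $X^{n'} - \mu$. This system is governed by the parities of $n'$, $\operatorname{ord}_{\mathbb{F}_q^*}(\lambda)$, $e$, and the auxiliary parameters $h$ and $r$ in the statement, and the resulting analysis produces cases $(2)$, $(3)$, and $(4)$. The main obstacle is exactly this $2$-adic case analysis: the behavior of $\nu_2(p^\ell+1)$ at $p \equiv 1 \pmod 4$ differs markedly from that at $p \equiv 3 \pmod 4$, and inside the latter regime the dichotomy between cases $(3)$ and $(4)$ hinges on whether $\nu_2(n'r) > v$. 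Keeping track of these valuations, and verifying that no $\star$-fixed factor can sneak in within the boundary subcases, is the technical heart of the argument.
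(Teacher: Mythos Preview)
The paper does not prove this theorem: it is quoted verbatim from Fan and Zhang~\cite{FZ17} and no argument is given in the present manuscript. There is therefore no proof in the paper to compare your proposal against.

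That said, your sketch is broadly the correct strategy and aligns with how results of this type are established in the source reference: reduce self-duality to a condition on the generator polynomial via the involution $f\mapsto f^\star$ on irreducible factors of $X^{n'}-\mu$, observe that self-duality is equivalent to every $\star$-fixed factor appearing with even multiplicity, and then carry out a $2$-adic case split on $p\bmod 4$ to decide when $\star$-fixed factors exist. Your derivation of the necessary condition $\operatorname{ord}_{\mathbb{F}_q^*}(\lambda)\mid\gcd(p^\ell+1,p^e-1)$ is also correct. The only caveat is that your outline remains a plan rather than a proof: the parameters $h$ and $r$ in items (3)--(4) are not defined in the statement as reproduced here (they come from the cyclotomic-coset bookkeeping in \cite{FZ17}), and the ``boundary subcases'' you allude to are exactly where the work lies. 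If you intend to supply a full proof, you will need to make those definitions explicit and actually execute the valuation computations rather than gesture at them.
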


\section{Generalization of frames}
We extend the frame theory over finite fields as developed in \cite{G22:1}, \cite{G22:2}, by incorporating the $\ell$-Galois inner product. Let $\ell$ be an integer with $0 \le \ell \le e-1$.

We define a sesquilinear form $(\,,\,)_\ell:\mathbb{F}_q^n\times \mathbb{F}_q^n\to\mathbb{F}_q$ by
\[
(\mathbf{x},\mathbf{y})_\ell=\sum_{i=1}^n\sigma_\ell(x_i)y_i
\]
for $\mathbf{x}=(x_1, x_2, \ldots, x_n), \mathbf{y}=(y_1, y_2, \ldots, y_n)\in\mathbb{F}_q^n$. This form is related to the $\ell$-Galois inner product via the identity 
\[
(\mathbf{x},\mathbf{y})_\ell=\langle \mathbf{y}, \mathbf{x}\rangle_\ell=\sigma_{\ell}(\langle\mathbf{x}, \mathbf{y}\rangle_{e-\ell}),
\]
and hence may be viewed as a conjugate form of the $\ell$-Galois inner product.

\begin{prop}\label{prop:a}
The form $(\,,\,)_\ell$ satisfies the following:
\begin{enumerate}
    \item $(\,,\,)_\ell$ is $\sigma_\ell$-linear in the first coordinate, that is, for every $\mathbf{x}, \mathbf{y}\in\mathbb{F}_q^n$ and $a\in\mathbb{F}_q$,
    \[
    (a\mathbf{x}, \mathbf{y})_\ell=\sigma_\ell(a)(\mathbf{x}, \mathbf{y})_\ell,
    \]
    and linear in the second coordinate.
    \item $(\mathbf{x},\mathbf{y})_\ell=\sigma_\ell((\mathbf{y},\mathbf{x})_{e-\ell})$ for every $\mathbf{x}, \mathbf{y}\in\mathbb{F}_q^n$.
    \item $(\,,\,)_\ell$ is non-degenerate.
\end{enumerate}
\end{prop}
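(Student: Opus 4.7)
The plan is to verify each of the three properties by direct computation, using only the formula $(\mathbf{x},\mathbf{y})_\ell=\sum_i\sigma_\ell(x_i)y_i$ together with the basic facts that $\sigma_\ell$ is a field automorphism of $\mathbb{F}_q$, that it is additive (so Frobenius distributes over sums), and that $\sigma_\ell\circ\sigma_{e-\ell}$ is the identity on $\mathbb{F}_q$ (since $x^{p^\ell\cdot p^{e-\ell}}=x^{q}=x$).

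For part (1), I would expand $(a\mathbf{x},\mathbf{y})_\ell=\sum_i\sigma_\ell(ax_i)y_i$ and pull out $\sigma_\ell(a)$ using the multiplicativity of $\sigma_\ell$, and handle additivity in $\mathbf{x}$ by the additivity of $\sigma_\ell$; linearity in the second coordinate is immediate from the $\mathbb{F}_q$-linear dependence of the formula on each $y_i$. For part (2), the identity collapses after rewriting the right-hand side: $\sigma_\ell((\mathbf{y},\mathbf{x})_{e-\ell})=\sigma_\ell\!\bigl(\sum_i\sigma_{e-\ell}(y_i)x_i\bigr)=\sum_i\sigma_\ell(\sigma_{e-\ell}(y_i))\sigma_\ell(x_i)=\sum_i y_i\sigma_\ell(x_i)$, which equals $(\mathbf{x},\mathbf{y})_\ell$ by commutativity of multiplication in $\mathbb{F}_q$. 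This also gives an alternate quick proof of the $\sigma_\ell$-linearity in the first slot in part (1) via linearity in the second slot of $(\,,\,)_{e-\ell}$ and applying $\sigma_\ell$.

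For part (3), I would show the stronger statement that both ``radicals'' of the form are trivial: if $(\mathbf{x},\mathbf{y})_\ell=0$ for every $\mathbf{y}\in\mathbb{F}_q^n$, then choosing $\mathbf{y}$ to be the standard basis vector $\mathbf{e}_j$ yields $\sigma_\ell(x_j)=0$, and since $\sigma_\ell$ is an automorphism this forces $x_j=0$ for each $j$; the argument in the opposite slot is symmetric, or alternatively follows from part (2).

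There is no real obstacle here; the only thing to be careful about is that $\sigma_\ell$ is not $\mathbb{F}_q$-linear but only $\mathbb{F}_p$-linear, which is precisely why the form is $\sigma_\ell$-sesquilinear rather than bilinear in part (1), and why one must use $\sigma_{e-\ell}$ (the inverse of $\sigma_\ell$) rather than $\sigma_\ell$ itself to swap arguments in part (2). Once these conventions are respected, every statement reduces to a one-line manipulation.
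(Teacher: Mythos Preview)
Your proposal is correct and follows essentially the same path as the paper: direct expansion for part (1), the identity $\sigma_\ell\circ\sigma_{e-\ell}=\mathrm{id}$ for part (2), and testing against the standard basis vectors for part (3). The only cosmetic difference is that you expand the right-hand side in part (2) while the paper expands the left, and you note both radicals are trivial whereas the paper checks only the left one.
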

\begin{proof}
Take any $\mathbf{x}=(x_1, \ldots, x_n), \mathbf{y}=(y_1, \ldots, y_n), \mathbf{u}=(u_1, \ldots, u_n)$ and $\mathbf{v}=(v_1, \ldots, v_n)$ in $\mathbb{F}_q^n$.
\begin{enumerate}
\item For every $a, b\in\mathbb{F}_q$, we have
    \begin{align*}
    (a\mathbf{x}+\mathbf{u}, \mathbf{y})_\ell&=\sum_{i=1}^n\sigma_\ell(ax_i+u_i)y_i=\sigma_\ell(a)\sum_{i=1}^n\sigma_\ell(x_i)y_i+\sum_{i=1}^n\sigma_\ell(u_i)y_i\\&=\sigma_\ell(a)(\mathbf{x}, \mathbf{y})+(\mathbf{u}, \mathbf{y}).
    \end{align*}
    Similarly, we have
    \begin{align*}
    (\mathbf{x}, b\mathbf{y}+\mathbf{v})_\ell&=\sum_{i=1}^n\sigma_\ell(x_i)(by_i+v_i)=b\sum_{i=1}^n\sigma_\ell(x_i)y_i+\sum_{i=1}^n\sigma_\ell(x_i)v_i\\&=b(\mathbf{x}, \mathbf{y})+(\mathbf{x}, \mathbf{v}).
    \end{align*}
\item We compute
\[
(\mathbf{x}, \mathbf{y})_\ell=\sum_{i=1}^n\sigma_\ell(x_i)y_i=\sigma_\ell\left(\sum_{i=1}^nx_i\sigma_{e-\ell}(y_i)\right)=\sigma_\ell((\mathbf{y}, \mathbf{x})_{e-\ell}).
\]
\item Assume that $(\, , \,)_\ell$ is a degenerate form. Then, there exists a nonzero $\mathbf{k}=(k_1, k_2, \ldots, k_n)\in\mathbb{F}_q^n$ such that
\[
(\mathbf{k}, e_i)_\ell=\sigma_\ell(k_i)=0
\]
for each $1\le i\le n$, where $e_i$ denotes the vector with $1$ in the $i$-th coordinate and $0$ elsewhere. This implies that $\mathbf{k}=\mathbf{0}$, which contradicts out assumption.
\end{enumerate}
\end{proof}

Proposition~\ref{prop:a} establishes that the form $(\,,\,)_\ell$ possesses the properties required to support a frame-theoretic structure over finite fields. In particular, its linearity and non-degeneracy enable a coherent definition of synthesis, analysis, and Gramian operators analogous to those in classical frame theory. We now introduce these operators in the context of the $\ell$-Galois inner product.

\begin{defn}
    Let $\{\varphi_i\}_{i\in[m]}$ where $[m]=\{1, 2, \ldots, m\}$ be a finite sequence of vectors in $\mathbb{F}_q^n$ where $m\ge n$.
    \begin{enumerate}
    \item[(i)] The \textit{synthesis operator} $\Phi:\mathbb{F}_q^m\to\mathbb{F}_q^n$ is defined as
   		\[
    	\Phi(\mathbf{x})=\sum_{i=1}^mx_i\varphi_i\mbox{ for }\mathbf{x}=(x_1, x_2, \ldots x_m)\in\mathbb{F}_q^m.
    	\]
    \item[(ii)] The $\ell$-Galois \textit{analysis operator} $\Phi^{\dagger_\ell}:\mathbb{F}_q^n\to\mathbb{F}_q^m$ is defined as
    	\[
    	\Phi^{\dagger_\ell}(\mathbf{y})=\{(\varphi_i, \mathbf{y})_\ell\}_{i\in [m]}\mbox{ for }\mathbf{y}\in\mathbb{F}_q^n.
    	\]
    \item[(iii)] The $\ell$-Galois \textit{frame operator} $\Phi\Phi^{\dagger_\ell}:\mathbb{F}_q^n\to\mathbb{F}_q^n$ is the composition of the synthesis operator and the analysis operator which is defined as
    	\[
    	\Phi\Phi^{\dagger_\ell}(\mathbf{y})=\sum_{i=1}^m(\varphi_i, \mathbf{y})_\ell\varphi_i\mbox{ for }\mathbf{y}\in\mathbb{F}_q^n.
    	\]
    \end{enumerate}
\end{defn}
\begin{rem}
Given a sequence of  vectors $\{\varphi_i\}_{i\in[m]}$ in $\mathbb{F}_q^n$, the $\ell$-Galois analysis operator $\Phi^{\dagger_\ell}:\mathbb{F}_q^n\to\mathbb{F}_q^m$ is the adjoint of the synthesis operator $\Phi$ with respect to the form $(\,,\,)_\ell$.
\end{rem}
	Take any $\mathbf{v}=(v_1, v_2, \ldots v_m)\in\mathbb{F}_q^m$ and $\mathbf{u}=(u_1, u_2, \ldots u_n)\in \mathbb{F}_q^n$. We regard the synthesis operator $\Phi$ as an $n \times m$ matrix $\Phi=(\phi_{ij})$ whose $j$-th column is the vector $\varphi_j \in \mathbb{F}_q^n$. Then we have
\begin{align*}
( \Phi\mathbf{v}, \mathbf{u})_\ell&=\left( \sum_{i=1}^mv_i\varphi_i, \mathbf{u}\right)_\ell=\sum_{j=1}^n\left(\sum_{i=1}^mv_i\phi_{ji}\right)^{p^\ell}\cdot u_j=\sum_{i=1}^m\left(\sum_{j=1}^n\phi_{ji}^{p^\ell}u_j\right)\cdot (v_i)^{p^\ell}\\&=\sum_{i=1}^m(\varphi_i, \mathbf{u})_\ell \cdot (v_i)^{p^\ell}=\left(\mathbf{v}, \{(\varphi_i, \mathbf{y})_\ell\}_{i\in[m]}\right)_\ell =(\mathbf{v}, \Phi^{\dagger_\ell}\mathbf{u})_\ell.
\end{align*}	

The synthesis and analysis operators play crucial roles in the theory of frames. The synthesis operator linearly combines the frame elements to reconstruct vectors, while the $\ell$-Galois analysis operator extracts the $\ell$-Galois inner products with respect to the frame vectors. The $\ell$-Galois frame operator $\Phi \Phi^{\dagger_\ell}$ captures the overall reconstruction behavior of the system.

\begin{defn}
    The \textit{$\ell$-Galois Gramian} $\Phi^{\dagger_\ell}\Phi:\mathbb{F}_q^m\to\mathbb{F}_q^m$ is the composition in opposite direction, and we call its representation in the standard basis, the \textit{$\ell$-Galois Gram matrix} of $\{\varphi_i\}_{i\in[m]}$. The $\ell$-Galois Gram matrix $G$ is given as
    \[
    G=\{(\varphi_i, \varphi_j)_\ell\}_{i, j\in [m]}.
    \]
\end{defn}

As we have mentioned, the synthesis operator $\Phi$ may be identified with the $n \times m$ matrix whose columns are the vectors $\varphi_i$. Under this identification, the $\ell$-Galois analysis operator corresponds to the matrix $\sigma_\ell(\Phi)^t$. Throughout the remainder of the paper, we will use the symbol $\Phi$ interchangeably to denote both the operator and its associated matrix representation.

\begin{defn}[\cite{G22:1}]
    A sequence $\{\varphi_i\}_{i\in[m]}$ over $\mathbb{F}_q^n$ where $m\ge n$, is called a \textit{frame} for $\mathbb{F}_q^n$ if it spans $\mathbb{F}_q^n$.
\end{defn}

\begin{prop}\label{Grank}
	Let $\Phi=\{\varphi_i\}_{i\in[m]}$ be a frame for $\mathbb{F}_q^n$. Then $\textup{Ker }\Phi^{\dagger_\ell}\Phi=\textup{Ker }\Phi$ and $\textup{Im }\Phi^{\dagger_\ell}\Phi=\textup{Im }\Phi^{\dagger_\ell}$.
\end{prop}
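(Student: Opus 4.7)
The plan is to prove the two equalities by reducing each to the corresponding properties of $\Phi$ and $\Phi^{\dagger_\ell}$ separately, using the adjoint identity $(\Phi\mathbf{v},\mathbf{u})_\ell=(\mathbf{v},\Phi^{\dagger_\ell}\mathbf{u})_\ell$ established just before the proposition, together with the non-degeneracy of $(\,,\,)_\ell$ proved in Proposition~\ref{prop:a}(3), and the frame hypothesis that $\Phi:\mathbb{F}_q^m\to\mathbb{F}_q^n$ is surjective.

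For the kernel equality, the inclusion $\textup{Ker }\Phi\subseteq\textup{Ker }\Phi^{\dagger_\ell}\Phi$ is immediate from composition. For the reverse inclusion, I would take $\mathbf{v}\in\textup{Ker }\Phi^{\dagger_\ell}\Phi$ and, for an arbitrary $\mathbf{w}\in\mathbb{F}_q^m$, apply the adjoint identity to obtain
\[
(\Phi\mathbf{w},\Phi\mathbf{v})_\ell=(\mathbf{w},\Phi^{\dagger_\ell}\Phi\mathbf{v})_\ell=0.
\]
Because $\{\varphi_i\}_{i\in[m]}$ spans $\mathbb{F}_q^n$, the vector $\Phi\mathbf{w}$ ranges over all of $\mathbb{F}_q^n$ as $\mathbf{w}$ varies, so $(\mathbf{u},\Phi\mathbf{v})_\ell=0$ for every $\mathbf{u}\in\mathbb{F}_q^n$. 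Non-degeneracy of $(\,,\,)_\ell$ then forces $\Phi\mathbf{v}=\mathbf{0}$, as desired.

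For the image equality, again $\textup{Im }\Phi^{\dagger_\ell}\Phi\subseteq\textup{Im }\Phi^{\dagger_\ell}$ is immediate. For the reverse, given any $\mathbf{y}\in\mathbb{F}_q^n$, the surjectivity of $\Phi$ provides some $\mathbf{v}\in\mathbb{F}_q^m$ with $\Phi\mathbf{v}=\mathbf{y}$, and then $\Phi^{\dagger_\ell}\mathbf{y}=\Phi^{\dagger_\ell}\Phi\mathbf{v}\in\textup{Im }\Phi^{\dagger_\ell}\Phi$. (Alternatively, one could proceed by rank counting: the kernel equality gives $\operatorname{rank}(\Phi^{\dagger_\ell}\Phi)=\operatorname{rank}(\Phi)=n$, while a short argument using non-degeneracy shows $\Phi^{\dagger_\ell}$ is injective, hence $\operatorname{rank}(\Phi^{\dagger_\ell})=n$; since one image sits inside the other and both have the same dimension, they coincide.)

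The only real subtlety, and the step I expect to be the most delicate, is the kernel direction: over finite fields one cannot argue via positivity as in the real or complex case (e.g., deducing $\Phi\mathbf{v}=\mathbf{0}$ from $(\Phi\mathbf{v},\Phi\mathbf{v})_\ell=0$), so it is essential to combine the adjoint identity with the surjectivity of $\Phi$ and the non-degeneracy of $(\,,\,)_\ell$ rather than trying to imitate the Hilbert-space proof directly.
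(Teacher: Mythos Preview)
Your proposal is correct and follows essentially the same approach as the paper: both use surjectivity of $\Phi$ to get the image equality directly, and for the nontrivial kernel inclusion both pair $\Phi^{\dagger_\ell}\Phi\mathbf{v}=\mathbf{0}$ against an arbitrary $\mathbf{w}\in\mathbb{F}_q^m$, apply the adjoint identity, and then invoke surjectivity of $\Phi$ together with non-degeneracy of $(\,,\,)_\ell$ to conclude $\Phi\mathbf{v}=\mathbf{0}$. Your remark that positivity is unavailable over finite fields and must be replaced by surjectivity plus non-degeneracy is exactly the point.
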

\begin{proof}
 Since $\Phi$ is a frame for $\mathbb{F}_q^n$, it spans $\mathbb{F}_q^n$, and hence the operator $\Phi$ is surjective. It follows that
\[
\textup{Im }\Phi^{\dagger_\ell} \Phi = \textup{Im }\Phi^{\dagger_\ell}.
\]

Next, we show that $\textup{Ker }\Phi^{\dagger_\ell} \Phi = \textup{Ker }\Phi$. The inclusion $\textup{Ker }\Phi \subseteq \textup{Ker }\Phi^{\dagger_\ell} \Phi$ is immediate. For the reverse inclusion, suppose $\mathbf{x} \in \textup{Ker }\Phi^{\dagger_\ell} \Phi$. Since $\Phi^{\dagger_\ell}\Phi(\mathbf{x})=\mathbf{0}$, for any $\mathbf{y} \in \mathbb{F}_q^m$, we have
\[
0=(\mathbf{y}, \Phi^{\dagger_\ell} \Phi \mathbf{x})_\ell = (\Phi \mathbf{y}, \Phi \mathbf{x})_\ell.
\]
Since $\Phi$ is surjective, $\Phi \mathbf{x}$ is orthogonal to every vector in $\mathbb{F}_q^n$, implying $\Phi \mathbf{x} = \mathbf{0}$. Hence, $\mathbf{x} \in \textup{Ker }\Phi$. This completes the proof.
\end{proof}

\begin{defn}
    Let $\{\varphi_i\}_{i\in[m]}$ be a frame for $\mathbb{F}_q^n$. We call $\{\varphi_i\}_{i\in[m]}$ an \textit{$\ell$-Galois $c$-tight frame} if its frame operator satisfies $\Phi\Phi^{\dagger_\ell}=cI$ for some $c\in\mathbb{F}_q$ and an $n\times n$ identity matrix $I$.
\end{defn}

Suppose that $\{ \mathbf{e}_1, \dots, \mathbf{e}_n \}$ is an orthonormal basis for $\mathbb{F}_q^n$, that is,
\[
(\mathbf{e}_i, \mathbf{e}_j)_\ell = \delta_{ij}
\quad \text{for all } 1 \le i, j \le n,
\]
where $\delta_{ij}$ is the Kronecker delta. Then, for any $\mathbf{v}\in\mathbb{F}_q^n$, we have
\[
\mathbf{v}=\sum_{i=1}^n(e_i, \mathbf{v})_\ell e_i,
\]
which means that $\{ \mathbf{e}_1, \dots, \mathbf{e}_n \}$ is a $1$-tight frame. 

Conversely, if the frame $\Phi = \{ \varphi_1, \ldots, \varphi_m \} \subseteq \mathbb{F}_q^n$ is a $c$-tight frame with respect to the form $(\,,\,)_\ell$, then although the frame vectors do not form an orthonormal set, they still allow an expression of $\mathbf{v}$ as a linear combination of the frame vectors $\varphi_i$, where the coefficients are derived from the $\ell$-Galois inner products between $\mathbf{v}$ and the frame vectors $\varphi_i$.

\begin{rem}
	Let $\{\varphi_i\}_{i\in[m]}$ be an $\ell$-Galois $c$-tight frame for $\mathbb{F}_q^n$. Then $\{\varphi_i\}_{i\in[m]}$ is an $(e-\ell)$-Galois $c^{p^{e-\ell}}$-tight frame. The reason is as follows.

Since $(\Phi\Phi^{\dagger_\ell})=cI$, we have
\begin{align*}
c^{p^{e-\ell}}I&=\sigma_{e-\ell}(c)I=\sigma_{e-\ell}(cI)=\sigma_{e-\ell}(\Phi\Phi^{\dagger_\ell})^t=\sigma_{e-\ell}(\Phi^{\dagger_\ell})^t\cdot\sigma_{e-\ell}(\Phi)^t\\&=\sigma_{e-\ell}(\sigma_\ell(\Phi^t))^t)\cdot\sigma_{e-\ell}(\Phi)^t=\Phi\Phi^{\dagger_{e-\ell}}.
\end{align*}
In particular, if $c\in\mathbb{F}_p$, then it is an $(e-\ell)$-Galois $c$-tight frame.
\end{rem}

\begin{thm}\label{Gram}
    Let $\{\varphi_i\}_{i\in[m]}$ be a frame for $\mathbb{F}_q^n$. Then the following are equivalent.
    \begin{enumerate}
        \item $\{\varphi_i\}_{i\in[m]}$ is an $\ell$-Galois $c$-tight frame.
        \item The $\ell$-Galois Gramian satisfies $(\Phi^{\dagger_\ell}\Phi)^2=c\Phi^{\dagger_\ell}\Phi$, that is $G^2=cG$ where $G$ is the $\ell$-Galois Gram matrix.
        \item $(\Phi^{\dagger_{e-\ell}}\mathbf{x}, \Phi^{\dagger_\ell}\mathbf{y})_\ell=c(\mathbf{x}, \mathbf{y})_\ell$ for every $\mathbf{x}, \mathbf{y}\in\mathbb{F}_q^n$.
    \end{enumerate}
\end{thm}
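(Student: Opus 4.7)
The plan is to organize the proof as two pairs of implications, (1) $\Leftrightarrow$ (2) and (1) $\Leftrightarrow$ (3), each relying on a basic structural fact already established: surjectivity of the synthesis operator (plus injectivity of its partial adjoint) for the first pair, and the adjoint identity together with non-degeneracy of $(\,,\,)_\ell$ for the second.

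For (1) $\Rightarrow$ (2), I would simply substitute $\Phi\Phi^{\dagger_\ell}=cI$ into $(\Phi^{\dagger_\ell}\Phi)^2=\Phi^{\dagger_\ell}(\Phi\Phi^{\dagger_\ell})\Phi$, so the first implication is one line. For (2) $\Rightarrow$ (1), starting from $(\Phi^{\dagger_\ell}\Phi)^2=c\,\Phi^{\dagger_\ell}\Phi$, I would factor this as $\Phi^{\dagger_\ell}(\Phi\Phi^{\dagger_\ell}-cI)\Phi=0$. Because $\Phi$ is surjective (it is a frame), its columns span $\mathbb{F}_q^n$, and because $\Phi^{\dagger_\ell}=\sigma_\ell(\Phi)^t$ has the same rank $n$ as $\Phi$, the matrix $\Phi^{\dagger_\ell}\colon\mathbb{F}_q^n\to\mathbb{F}_q^m$ is injective. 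Applying injectivity of $\Phi^{\dagger_\ell}$ column by column to $(\Phi\Phi^{\dagger_\ell}-cI)\Phi$ yields $(\Phi\Phi^{\dagger_\ell}-cI)\Phi=0$, and then surjectivity of $\Phi$ gives $\Phi\Phi^{\dagger_\ell}-cI=0$.

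For (1) $\Leftrightarrow$ (3), the key is the adjoint identity derived in the discussion following the definition of the analysis operator, namely $(\Phi\mathbf{v},\mathbf{u})_\ell=(\mathbf{v},\Phi^{\dagger_\ell}\mathbf{u})_\ell$. I would first verify, by the same sort of coordinate computation as in the paper, that the $\ell$-Galois adjoint of $\Phi^{\dagger_{e-\ell}}=\sigma_{e-\ell}(\Phi)^t$ is $\sigma_\ell(\sigma_{e-\ell}(\Phi)^t)^t=\sigma_e(\Phi)=\Phi$, using that $\sigma_e$ is the identity automorphism of $\mathbb{F}_q$. From this, the formula $(\Phi^{\dagger_{e-\ell}}\mathbf{x},\Phi^{\dagger_\ell}\mathbf{y})_\ell=(\mathbf{x},\Phi\Phi^{\dagger_\ell}\mathbf{y})_\ell$ drops out for every $\mathbf{x},\mathbf{y}\in\mathbb{F}_q^n$. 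Then (1) $\Rightarrow$ (3) is immediate, and for (3) $\Rightarrow$ (1) I would rewrite (3) as $(\mathbf{x},(\Phi\Phi^{\dagger_\ell}-cI)\mathbf{y})_\ell=0$ for all $\mathbf{x}$, and invoke the non-degeneracy of $(\,,\,)_\ell$ established in Proposition~\ref{prop:a}(3) to conclude $\Phi\Phi^{\dagger_\ell}=cI$.

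The only step with any subtlety is the identification $(\Phi^{\dagger_{e-\ell}})^{\dagger_\ell}=\Phi$; it is easy to slip on an exponent because two different Galois automorphisms are being composed. Keeping careful track that $\sigma_\ell\circ\sigma_{e-\ell}=\sigma_e=\mathrm{id}$ is what makes the shift by $e-\ell$ in statement (3) precisely the right choice to produce $\Phi\Phi^{\dagger_\ell}$ on the right-hand side of the bracket, rather than some twisted variant like $\sigma_{2\ell}(\Phi)\Phi^{\dagger_\ell}$. Once this adjoint computation is in place, the rest of the argument is purely formal.
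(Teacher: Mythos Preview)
Your argument is correct. The paper organizes the proof as a cycle $(1)\Rightarrow(2)\Rightarrow(3)\Rightarrow(1)$ rather than your two separate equivalences, and the main substantive difference lies in how $(2)$ is linked back to the other statements. The paper never proves $(2)\Rightarrow(1)$ directly; instead it establishes $(2)\Rightarrow(3)$ by using surjectivity of $\Phi$ to write $\mathbf{x}=\Phi\mathbf{u}$, $\mathbf{y}=\Phi\mathbf{v}$, then unwinding $(\Phi^{\dagger_{e-\ell}}\Phi\mathbf{u},\Phi^{\dagger_\ell}\Phi\mathbf{v})_\ell$ via the adjoint identity $(\Phi^{\dagger_{e-\ell}}\Phi)^{\dagger_\ell}=\Phi^{\dagger_\ell}\Phi$ so that the hypothesis $G^2=cG$ can be inserted. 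Your route bypasses this lifting step: you go straight from $\Phi^{\dagger_\ell}(\Phi\Phi^{\dagger_\ell}-cI)\Phi=0$ to $\Phi\Phi^{\dagger_\ell}=cI$ by the rank argument that $\Phi^{\dagger_\ell}$ is injective, and you obtain $(1)\Rightarrow(3)$ in one line from the adjoint identity $(\Phi^{\dagger_{e-\ell}})^{\dagger_\ell}=\Phi$ applied directly in $\mathbb{F}_q^n$. Both approaches hinge on the same adjoint calculation you flagged, and on non-degeneracy for $(3)\Rightarrow(1)$; your version is a bit more economical because it avoids pulling back to $\mathbb{F}_q^m$, at the cost of invoking the extra (easy) fact that $\sigma_\ell$ preserves rank.
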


\begin{proof}
    ($1\Rightarrow 2$) By the definition of a $c$-tight frame, we have $\Phi\Phi^{\dagger_\ell} = cI$. This yields
    \[
    (\Phi^{\dagger_\ell}\Phi)^2=(\Phi^{\dagger_\ell}\Phi)(\Phi^{\dagger_\ell}\Phi)=\Phi^{\dagger_\ell}(cI)\Phi=c\Phi^{\dagger_\ell}\Phi.
    \]
    ($2\Rightarrow 3$) Since $\{\varphi_i\}$ is a frame, every $\mathbf{x}, \mathbf{y} \in \mathbb{F}_q^n$ can be written as $\Phi \mathbf{u}$ and $\Phi \mathbf{v}$ for some $\mathbf{u}, \mathbf{v} \in \mathbb{F}_q^m$. 
    Since $\Phi^{\dagger_\ell}=\sigma_\ell(\Phi)^t$, we have
    \begin{align*}
        (\Phi^{\dagger_{e-\ell}}\mathbf{x}, \Phi^{\dagger_\ell}\mathbf{y})_\ell &= (\Phi^{\dagger_{e-\ell}}\Phi \mathbf{u}, \Phi^{\dagger_\ell}\Phi \mathbf{v})_\ell=( \mathbf{u}, (\Phi^{\dagger_{e-\ell}}\Phi)^{\dagger_\ell}(\Phi^{\dagger_\ell}\Phi)\mathbf{v})_\ell\\&=( \mathbf{u}, (\Phi^{\dagger_\ell}\Phi)^2\mathbf{v})_\ell=( \mathbf{u}, c(\Phi^{\dagger_\ell}\Phi)\mathbf{v})_\ell\\&=c( \mathbf{u}, \Phi^{\dagger_\ell}\Phi \mathbf{v})_\ell=c(\Phi \mathbf{u}, \Phi \mathbf{v})_\ell=c( \mathbf{u}, \mathbf{v})_\ell.
    \end{align*}
    ($3\Rightarrow 1$) For $\mathbf{x}, \mathbf{y}\in\mathbb{F}_q^n$, we have
    \[
    0=( \Phi^{\dagger_{e-\ell}} \mathbf{x}, \Phi^{\dagger_\ell} \mathbf{y})_\ell-c( \mathbf{x}, \mathbf{y})_\ell=(\mathbf{x}, \Phi\Phi^{\dagger_\ell} \mathbf{y})_\ell-( \mathbf{x}, c\mathbf{y})_\ell=( \mathbf{x}, (\Phi\Phi^{\dagger_\ell}-cI)\mathbf{y})_\ell.
    \]
    This implies that $(\Phi\Phi^{\dagger_\ell}-cI)\mathbf{y}=0$ for every $\mathbf{y}\in\mathbb{F}_q^n$ by Proposition~\ref{prop:a}, that is, $\Phi\Phi^{\dagger_\ell}=cI$.
\end{proof}

\begin{defn}
     Let $\{\varphi_i\}_{i\in[m]}$ be an $\ell$-Galois $c$-tight frame for $\mathbb{F}_q^n$. If there exists $a\in \mathbb{F}_q$ such that $(\varphi_i, \varphi_i)_\ell=a$ for every $i\in[m]$, then we call$\{\varphi_i\}_{i\in[m]}$ an $\ell$-Galois $(a, c)$-equal norm tight frame, or simply $(a, c)_\ell$-NTF.
\end{defn}

This definition imposes an additional uniformity condition on the frame vectors by requiring that they all have equal $\ell$-Galois norm. A natural next step is to further constrain the off-diagonal inner products to enforce an equiangular structure, leading to the following notion.

\begin{defn}
    Let $\{\varphi_i\}_{i\in[m]}$ be an $\ell$-Galois $(a, c)$-equal norm tight frame for $\mathbb{F}_q^n$. If there exists a constant $b\in\mathbb{F}_q$ satisfying $(\varphi_i, \varphi_j)_\ell(\varphi_j, \varphi_i)_\ell=b$ for every $i\ne j\in[m]$, then we call $\{\varphi_i\}_{i\in[m]}$ an $\ell$-Galois $(a, b, c)$-equiangular tight frame, or simply an $(a, b, c)_\ell$-ETF.
\end{defn}

\begin{prop}\label{propetf}
	Let $\{\varphi_i\}_{i\in[m]}$ be an $\ell$-Galois $(a, 0, c)$-equiangular tight frame for $\mathbb{F}_q^n$. Then $a=c$ or $a=0$.
\end{prop}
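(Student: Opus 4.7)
The plan is to exploit Theorem~\ref{Gram}, which translates the tight frame condition $\Phi\Phi^{\dagger_\ell} = cI$ into the Gramian identity $G^2 = cG$, and then to extract the required scalar identity by comparing the diagonal entries on both sides.

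First I would write down the $\ell$-Galois Gram matrix $G = \{(\varphi_i,\varphi_j)_\ell\}_{i,j\in[m]}$, so that by Theorem~\ref{Gram}(2) the hypothesis that $\{\varphi_i\}_{i\in[m]}$ is an $\ell$-Galois $c$-tight frame is equivalent to $G^2 = cG$. The $(i,i)$-entry of $G^2$ expands as
\[
(G^2)_{ii} = \sum_{k=1}^{m} G_{ik}G_{ki} = \sum_{k=1}^{m}(\varphi_i,\varphi_k)_\ell(\varphi_k,\varphi_i)_\ell.
\]
On the right-hand side of $G^2 = cG$, the $(i,i)$-entry is simply $c\,G_{ii} = c\,(\varphi_i,\varphi_i)_\ell = ca$ by the equal-norm hypothesis.

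Next I would use the two remaining hypotheses of an $(a,0,c)_\ell$-ETF to collapse the sum. The diagonal term $k=i$ contributes $(\varphi_i,\varphi_i)_\ell^2 = a^2$, and every off-diagonal term $k\neq i$ contributes $(\varphi_i,\varphi_k)_\ell(\varphi_k,\varphi_i)_\ell = b = 0$ by definition of an $(a,0,c)_\ell$-ETF. Hence $(G^2)_{ii} = a^2$, and equating with the previous expression gives $a^2 = ca$, i.e.\ $a(a-c) = 0$. Since $\mathbb{F}_q$ is a field and therefore an integral domain, this forces $a = 0$ or $a = c$, which is the desired conclusion.

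There is essentially no obstacle here: the only subtlety is noticing that the diagonal of $G^2$ naturally separates into the single on-diagonal contribution $a^2$ and the off-diagonal contributions, all of which vanish because $b=0$ is exactly the product $(\varphi_i,\varphi_k)_\ell(\varphi_k,\varphi_i)_\ell$ that appears in the Gramian square. Once this observation is in place the argument is a one-line scalar computation, and no characteristic or field-size assumption beyond working in $\mathbb{F}_q$ is needed.
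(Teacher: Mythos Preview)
Your proposal is correct and follows essentially the same approach as the paper: both use Theorem~\ref{Gram} to obtain $G^2=cG$, compare a diagonal entry, and reduce to $a^2=ca$. The paper simply picks the $(1,1)$-entry and assumes $a\ne 0$ at the outset, but the computation is identical to yours.
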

\begin{proof}
Assume that $a$ is nonzero. By Theorem~\ref{Gram}, we have
\[
a^2=(\Phi^{\dagger_\ell}\Phi)^2_{1,1}=c(\Phi^{\dagger_\ell}\Phi)_{1, 1}=ac.
\]
Thus, $a=c$.
\end{proof}

Let us denote $\mathbb{F}_q^{(p^\ell+1)}:=\{x^{p^\ell+1}:x\in\mathbb{F}_q\}$.
\begin{thm}\label{GaloisGram}
Let $a\in \mathbb{F}_q^{(p^\ell+1)}$, ($a\ne 0$). Then, an $m\times m$ matrix $G$ is the Gram matrix of an $(a, 0, a)_\ell$-ETF for $\mathbb{F}_q^n$ if and only if
\begin{enumerate}
\item[(i)] $n=m$, and
\item[(ii)] $G=aI$.
\end{enumerate}
\end{thm}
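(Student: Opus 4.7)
The plan is to prove the two implications separately. The forward direction will combine the tight-frame identity from Theorem~\ref{Gram} with a rank-plus-trace calculation on the Gram matrix, while the reverse direction will be a direct construction enabled by the hypothesis $a \in \mathbb{F}_q^{(p^\ell+1)}$.

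For the forward direction, suppose $\{\varphi_i\}_{i\in[m]}$ is an $(a,0,a)_\ell$-ETF with Gram matrix $G$. Applying Theorem~\ref{Gram} with $c = a$ yields $G^2 = aG$, so $G(G - aI) = 0$. Since $a \ne 0$, the polynomial $x(x-a)$ has two distinct roots, so $G$ is diagonalizable over $\mathbb{F}_q$ with eigenvalues in $\{0, a\}$. I would then pin down $\operatorname{rank}(G)$: because $\{\varphi_i\}$ spans $\mathbb{F}_q^n$, the matrix $\Phi$ is surjective of rank $n$, and since $\sigma_\ell$ acts as an automorphism entrywise, $\Phi^{\dagger_\ell} = \sigma_\ell(\Phi)^t$ also has rank $n$; by Proposition~\ref{Grank}, $\operatorname{rank}(G) = \dim \operatorname{Im}\Phi^{\dagger_\ell} = n$. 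Hence the nonzero eigenvalue $a$ occurs with multiplicity exactly $n$, giving $\operatorname{tr}(G) = na$. But the equal-norm condition $(\varphi_i, \varphi_i)_\ell = a$ directly yields $\operatorname{tr}(G) = ma$. Comparing and cancelling the nonzero $a$ forces $m = n$, and then $G$ is an invertible $m \times m$ matrix, so left-multiplying $G^2 = aG$ by $G^{-1}$ gives $G = aI$.

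For the reverse direction, the hypothesis $a \in \mathbb{F}_q^{(p^\ell+1)}$ with $a \ne 0$ supplies $\alpha \in \mathbb{F}_q^*$ satisfying $\alpha^{p^\ell+1} = a$. Setting $\varphi_i = \alpha\mathbf{e}_i$ for $i = 1, \ldots, n$, where $\{\mathbf{e}_i\}$ is the standard basis of $\mathbb{F}_q^n$, a one-line computation gives $(\varphi_i, \varphi_j)_\ell = \sigma_\ell(\alpha)\,\alpha\,\delta_{ij} = a\,\delta_{ij}$. From this all the required ETF conditions fall out: the vectors span $\mathbb{F}_q^n$, each has $\ell$-Galois norm $a$, every off-diagonal $(\varphi_i,\varphi_j)_\ell$ vanishes (so $b = 0$ trivially), and $\Phi\Phi^{\dagger_\ell} = \alpha\,\sigma_\ell(\alpha)\,I = aI$ is the tightness. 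The main obstacle, such as it is, lies in the forward direction: one must exploit $a \ne 0$ both to force the diagonalizability of $G$ and to cancel the common factor in the trace equation, both of which would collapse if $a = 0$. The rest is bookkeeping.
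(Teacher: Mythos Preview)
Your reverse direction is fine and matches the paper's. The forward direction, however, has a genuine gap: the trace identity $ma = na$ is an equation in $\mathbb{F}_q$, not in $\mathbb{Z}$. Cancelling the nonzero $a$ only yields $(m-n)\cdot 1_{\mathbb{F}_q}=0$, i.e.\ $m\equiv n\pmod p$, which does not force $m=n$ as integers. Notice that your argument never invokes the equiangularity hypothesis $b=0$; every step you wrote (the relation $G^2=aG$, the diagonalisability, the rank computation via Proposition~\ref{Grank}, and the trace) uses only the equal-norm and $a$-tightness conditions. If your argument were valid it would prove the stronger statement that every $(a,a)_\ell$-NTF has $m=n$, and over a finite field that is simply false in general.

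The paper closes this gap precisely by exploiting $b=0$. From $(\varphi_i,\varphi_j)_\ell(\varphi_j,\varphi_i)_\ell=0$ one knows that for each pair $i\ne j$ at least one of $G_{ij},G_{ji}$ vanishes; the paper uses this triangular-type structure to argue that Gaussian elimination on $G$ leaves the diagonal entries equal to $a$, whence $\operatorname{rank}(G)=m$ directly. Combining with $\operatorname{rank}(G)=n$ from Proposition~\ref{Grank} then gives $m=n$ as an equality of integers, after which your final step $G^2=aG\Rightarrow G=aI$ goes through. To repair your proof you need some argument of this flavour that actually consumes the $b=0$ hypothesis and produces an integer equality rather than a congruence.
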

\begin{proof}
Let $G$ be the Gram matrix of an $(a, 0, a)_\ell$-ETF. Then G is given as
\[
G=\begin{pmatrix}
a & b_{12} & b_{13} & \cdots  & b_{1m}\\
b_{12}' & a & b_{23} & \cdots & b_{2m}\\
b_{13}' & b_{23}' & a & \cdots & b_{3m}\\
\vdots & \vdots & \vdots & \ddots & \vdots\\
b_{1m}' & b_{2m}' & b_{3m}' & \cdots & a
\end{pmatrix},
\]
where $b_{ij}=0$ or $b_{ij}'=0$ for all $1\le i<j\le m$. Since one of each pair of off-diagonal entries in symmetric position is zero by the definition of an $\ell$-Galois ETF, Gaussian elimination on G does not affect the diagonal entries. By Proposition~\ref{Grank}, we have $n=\text{rank}(G)=m$. Since $G$ is a square matrix of rank $n=m$, $G$ is invertible. By Theorem~\ref{Gram}, we have $G^2=aG$ which gives $G=aI$.

We now consider the other direction. Let $G=aI$. Since there exists an element $a'\in \mathbb{F}_q$ such that $(a')^{p^\ell+1}=a$, we have $(a'I)^{\dagger_\ell}(a'I)=(\sigma(a')I)(a'I)=aI$. Hence, $G$ is a Gram matrix of an $(a, 0, a)_\ell$-ETF, namely $\Phi=a'I$.
\end{proof}

\section{$\ell$-Galois frames and $\ell$-Galois self-dual codes}
In this section, we investigate the relationship between $\ell$-Galois equiangular tight frames and $\ell$-Galois self-dual codes.
\begin{thm}
	Let $a\ne 0$ and $M$ be an $(a, 0, a)_\ell$-ETF consisting of $n$ vectors in $\mathbb{F}_q^n$. Let $G_0$ be the matrix defined as $G_0=[I|M]$. If $a=-1$, then $G_0$ generates a self-dual code, and if $a\ne -1$, then $G_0$ generates an LCD code.
\end{thm}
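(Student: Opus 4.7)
The plan is to compute $G_0\cdot\sigma_\ell(G_0)^t$ explicitly and then invoke Theorem~\ref{ghull} to read off the dimension of the $\ell$-Galois hull of the code $C$ generated by $G_0$. Since $G_0=[I\mid M]$ clearly has rank $n$ because of the identity block, $C$ is a $[2n,n]$ code, so the goal reduces to showing $h_\ell(C)=n$ when $a=-1$ and $h_\ell(C)=0$ otherwise.

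The central input is Theorem~\ref{GaloisGram}, which, applied to the $(a,0,a)_\ell$-ETF $M$ of $n$ vectors in $\mathbb{F}_q^n$, says that the $\ell$-Galois Gram matrix of $M$ is exactly $aI$, i.e.\ $\sigma_\ell(M)^t M=aI$. Because $a\neq 0$ and $M$ is a square matrix, this forces $M$ to be invertible and yields the flipped identity $M\sigma_\ell(M)^t=aI$ as well (equivalently, this is just the $a$-tight condition $\Phi\Phi^{\dagger_\ell}=aI$ restated). A direct block computation then gives
\[
G_0\cdot\sigma_\ell(G_0)^t \;=\; [I\mid M]\begin{pmatrix} I \\ \sigma_\ell(M)^t \end{pmatrix} \;=\; I+M\sigma_\ell(M)^t \;=\; (1+a)I.
\]

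Theorem~\ref{ghull} finishes the argument by cases. If $a=-1$, then $(1+a)I=0$, so $h_\ell(C)=n-0=n=\dim C$, meaning $C\subseteq C^{\perp_\ell}$; since $\dim C^{\perp_\ell}=2n-n=n$ (the $\ell$-Galois dual has the same dimension as the Euclidean dual, being its image under a field automorphism), the inclusion is forced to be an equality, so $C$ is $\ell$-Galois self-dual. If $a\neq -1$, then $(1+a)I$ has full rank $n$, so $h_\ell(C)=0$ and $C$ is $\ell$-Galois LCD. The only mildly delicate point in the whole argument is the symmetrization step $\sigma_\ell(M)^t M=aI \Rightarrow M\sigma_\ell(M)^t=aI$, but this is immediate from $M$ being square and invertible; everything else is a matrix computation plus the dimension formula of Theorem~\ref{ghull}.
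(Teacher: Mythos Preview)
Your proof is correct and follows essentially the same approach as the paper: both compute $G_0G_0^{\dagger_\ell}=I+MM^{\dagger_\ell}=(1+a)I$ using that $MM^{\dagger_\ell}=aI$ (from the tight-frame definition, or via Theorem~\ref{GaloisGram}), and then invoke the hull-dimension formula of Theorem~\ref{ghull} to split into the self-dual and LCD cases. Your write-up is in fact slightly more explicit than the paper's in justifying the passage from $M^{\dagger_\ell}M=aI$ to $MM^{\dagger_\ell}=aI$ and in noting why $\dim C^{\perp_\ell}=n$.
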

\begin{proof}
	From Theorem~\ref{GaloisGram}, we have $M^{\dagger_\ell}M=MM^{\dagger_\ell}=aI$. If $a=-1$, then we have
    \[
    \textup{rank}(G_0G_0^{\dagger_\ell})=\textup{rank}(I+MM^{\dagger_\ell})=0.
    \]
    This implies that a code $\mathcal{C}$, which is generated by $G_0$ is a self-orthogonal code. Since the length of $\mathcal{C}$ is $2n$ and its dimension is $n$, $\mathcal{C}$ is a self-dual code.
	
	Let $a\ne -1$. Then we have
    \[
    \textup{rank}(G_0G_0^{\dagger_\ell})=\textup{rank}(I+MM^{\dagger_\ell})=n.
    \]
    So, we conclude that $\mbox{Hull}_\ell(\mathcal{C})=\mathcal{O}$, that is, $\mathcal{C}$ is an LCD code.
\end{proof}

We have seen that an $\ell$-Galois ETF with $b=0$ naturally gives rise to either a self-dual code or an LCD code, depending on the value of the frame parameter $a$. Conversely, it is natural to ask under what conditions an $\ell$-Galois self-dual code yields an $\ell$-Galois ETF. The following theorem provides a detailed characterization of such conditions. Given a self-dual code with generator matrix $[I|A]$, it determines when a matrix of the form $M = rA + sI + tJ$ produces a Gram matrix $G = MM^{\dagger_\ell}$ corresponding to an $\ell$-Galois equiangular tight frame. The following is one of our main theorems, which generalizes Theorem 8 of~\cite{S:23}.

\begin{thm}\label{GETF}
Let $\mathcal{C}$ be an $\ell$-Galois self-dual code of length $2n$ with $n>1$ over $\mathbb{F}_q$, and let $[I|A]$ be a generator matrix of $\mathcal{C}$. Let $M=rA+sI+tJ$ where $r,s,t\in\mathbb{F}_q$. Then $G=MM^{\dagger_\ell}$ is the Gram matrix of an $(a, b, c)_\ell$-ETF consisting of $n$ vectors in $\mathbb{F}_q^n$ with $a\in\mathbb{F}_q^{(p^\ell+1)}$, $a\ne 0$ if and only if one of the following conditions holds:
\begin{enumerate}
\item[(i)] If $r\ne 0$ and $s=t=0$, then $a=c=-r^{p^\ell+1}$, $b=0$ and $\textup{rank }G=n$.
\item[(ii)] If $s\ne 0$ and $r=t=0$, then $a=c=s^{p^\ell+1}$, $b=0$ and $\textup{rank }G=n$.
\item[(iii)] If
	\begin{itemize}
	\item $s\ne 0$,  $t\ne 0$, $r=0$ and
	\item $2st^{p^\ell}+nt^{p^\ell+1}=0$,
	\end{itemize}
	then $a=c=s^{p^\ell+1}$, $b=0$ and $\textup{rank }G=n$.
\item[(iv)] If
	\begin{itemize}
	\item $r\ne 0$, $t\ne 0$, $s=0$,
	\item $\sum_{k=1}^na_{ik}=\theta$ for every $1\le i\le n$ and
	\item $rt^{p^\ell}\theta+tr^{p^\ell}\theta^{p^\ell}+nt^{p^\ell+1}=0$,
	\end{itemize}
	then $a=c=-r^{p^\ell+1}$, $b=0$ and $\textup{rank }G=n$.
\item[(v)] If
 	\begin{itemize}
 	\item $r\ne 0$, $s\ne 0$, $t=0$,
 	\item for some $\alpha\in\mathbb{F}_q$,  $rs^{p^\ell}a_{ii}+sr^{p^\ell}a_{ii}^{p^\ell}=\alpha$ for every $1\le i\le n$,
 	\item $s^{2p^\ell-2}(a_{ij}a_{ji})+r^{2p^\ell-2}(a_{ij}a_{ji})^{p^\ell}+s^{p^\ell-1}r^{p^\ell-1}(a_{ij}+a_{ji})=0$ for every $i\ne j$ and
 	\item $s^{p^\ell+1}-r^{p^\ell+1}+\alpha\in\mathbb{F}_q^{(p^\ell+1)}$ is nonzero, 
 	\end{itemize}
	then $a=c= s^{p^\ell+1}-r^{p^\ell+1}+\alpha$, $b=0$ and $\textup{rank }G=n$.
\item[(vi)] If
	\begin{itemize}
	\item $r\ne 0$, $s\ne 0$, $t\ne 0$,
	\item $rs^{p^\ell}a_{ii}+sr^{p^\ell}a_{ii}^{p^\ell}+rt^{p^\ell}\left(\sum_{k=1}^na_{ik}\right)+tr^{p^\ell}\left(\sum_{k=1}^na_{ik}\right)^{p^\ell}=\delta$ for every $1\le i\le n$,
	\item $(s^{p^\ell+1}-r^{p^\ell+1})+st^{p^\ell}+ts^{p^\ell}+nt^{p^\ell+1}+\delta\in\mathbb{F}_q^{(p^\ell+1)}$ is nonzero, and
	\item $rs^{p^\ell}a_{ij}+sr^{p^\ell}a_{ji}^{p^\ell}+rt^{p^\ell}(\sum_{k=1}^na_{ik})+tr^{p^\ell}(\sum_{k=1}^na_{jk})^{p^\ell}+st^{p^\ell}+ts^{p^\ell}+nt^{p^\ell+1}=0$ or $rs^{p^\ell}a_{ji}+sr^{p^\ell}a_{ij}^{p^\ell}+rt^{p^\ell}(\sum_{k=1}^na_{jk})+tr^{p^\ell}(\sum_{k=1}^na_{ik})^{p^\ell}+st^{p^\ell}+ts^{p^\ell}+nt^{p^\ell+1}=0$ for all $1\le i\ne j\le n$,
	\end{itemize}
	then $a=c=(s^{p^\ell+1}-r^{p^\ell+1})+st^{p^\ell}+ts^{p^\ell}+nt^{p^\ell+1}+\delta$, $b=0$ and $\textup{rank }G=n$.
\end{enumerate}
\end{thm}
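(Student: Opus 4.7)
The strategy rests on the identity $AA^{\dagger_\ell} = -I$, which is forced by the $\ell$-Galois self-duality of $\mathcal{C}$: since $[I\mid A]$ generates $\mathcal{C}$ and $\mathcal{C} = \mathcal{C}^{\perp_\ell}$, computing $[I\mid A]\,\sigma_\ell([I\mid A])^t$ yields $I + AA^{\dagger_\ell} = 0$. Together with $J^{\dagger_\ell} = J$ (since $1^{p^\ell}=1$ and $J$ is symmetric) and $J^2 = nJ$, this is the only structural ingredient needed to reduce $G = MM^{\dagger_\ell}$ to a polynomial in $I, A, A^{\dagger_\ell}, J$.

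The first step is to expand $G$ once and for all. Substituting $M^{\dagger_\ell} = r^{p^\ell}A^{\dagger_\ell} + s^{p^\ell}I + t^{p^\ell}J$ and simplifying gives
\begin{align*}
G = (s^{p^\ell+1} - r^{p^\ell+1})\,I &+ rs^{p^\ell}A + sr^{p^\ell}A^{\dagger_\ell} + rt^{p^\ell}AJ + tr^{p^\ell}JA^{\dagger_\ell} \\
&+ (st^{p^\ell}+ts^{p^\ell}+nt^{p^\ell+1})\,J.
\end{align*}
Reading off the $(i,j)$-entry gives the diagonal and off-diagonal entries of $G$ explicitly in terms of $a_{ij}$, $a_{ii}$, the row sums $\sum_k a_{ik}$, and the scalars $r,s,t$. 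The ETF requirement then decomposes into two parts: all diagonal entries must coincide with a common value $a \in \mathbb{F}_q^{(p^\ell+1)}\setminus\{0\}$, and every symmetric pair must satisfy $G_{ij}G_{ji}=0$ (since $b=0$).

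Next I would treat the six cases separately, each determined by which of $r,s,t$ are zero. Cases (i) and (ii) are essentially immediate: the expansion collapses to $G = -r^{p^\ell+1}I$ and $G = s^{p^\ell+1}I$, and Theorem~\ref{GaloisGram} then characterizes these exactly as Gram matrices of the claimed ETFs. Cases (iii)--(v) kill one of the three scalars and impose annihilation conditions so that the off-diagonal contributions of the surviving $J$, $A$, and $A^{\dagger_\ell}$ terms cancel while the diagonal becomes constant; translating the entries read off from the general expansion into this form produces precisely the equations displayed in the theorem. In each such case, once diagonal constancy and off-diagonal vanishing are established, Proposition~\ref{propetf} forces $a=c$, and invertibility of $M$ (verifiable case by case, e.g.\ as a rank-one perturbation of an invertible combination of $A$ and $I$) together with Proposition~\ref{Grank} gives $\textup{rank}\,G = n$.

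The main obstacle is Case (vi), where all three scalars are nonzero so no term in the expansion vanishes automatically. There, forcing the diagonal to be constant in $i$ produces the single equation involving $\delta$, while the off-diagonal requirement $G_{ij}G_{ji}=0$ bifurcates into the ``either/or'' vanishing condition shown in the statement, with the $(i,j)$ and $(j,i)$ positions twisted by $\sigma_\ell$. The careful bookkeeping here---tracking the $p^\ell$-twisted entries, verifying that $a=(s^{p^\ell+1}-r^{p^\ell+1})+st^{p^\ell}+ts^{p^\ell}+nt^{p^\ell+1}+\delta$ lies in $\mathbb{F}_q^{(p^\ell+1)}$ and is nonzero, and proving both directions of the iff---is the principal technical burden. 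In every case, the converse direction amounts to reading the general $(i,j)$-entry formula and observing that the listed conditions are exactly the defining equations of an $(a,0,a)_\ell$-ETF in that sub-configuration.
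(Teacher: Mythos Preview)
Your approach is the paper's: derive $AA^{\dagger_\ell}=-I$ from self-duality, expand $G$ as a combination of $I$, $A$, $A^{\dagger_\ell}$, $AJ$, $JA^{\dagger_\ell}$, $J$, and then split into the six cases determined by which of $r,s,t$ vanish. The only substantive step you skate over is the \emph{necessity} argument in cases (iii) and (iv): there the paper must also show that when the auxiliary scalar (e.g.\ $2st^{p^\ell}+nt^{p^\ell+1}$ in (iii), or $rt^{p^\ell}\theta+tr^{p^\ell}\theta^{p^\ell}+nt^{p^\ell+1}$ in (iv)) is \emph{nonzero}, $G$ cannot be the Gram matrix of an $(a,b,c)_\ell$-ETF. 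It does this by writing $G=\lambda I+\mu J$, invoking $G^2=cG$ from Theorem~\ref{Gram} to pin down $c$, and then checking that $\det G=\lambda^{n-1}(\lambda+n\mu)=0$, contradicting $\textup{rank}\,G=n$. Your phrase ``the listed conditions are exactly the defining equations'' does not capture this, and your appeal to invertibility of $M$ goes the wrong way for necessity. You also omit the degenerate case $r=s=0$, $t\ne 0$, which the paper disposes of at the outset by noting $\textup{rank}\,G=1$.
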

\begin{proof}
Since $\mathcal{C}$ is a Galois self-dual code, 	by Theorem~\ref{ghull}, the rank of $([I|A])([I|A])^{\dagger_\ell}$ is $0$. This implies
\[
	([I|A])([I|A])^{\dagger_\ell}=I+AA^{\dagger_\ell}=O,
\]
so $AA^{\dagger_\ell} = -I$. We consider each case separately according to the values of $r$, $s$, and $t$. In case that $r=s=0$ and $t\ne 0$, $G$ cannot be the Gram matrix of an $(a, b, c)_\ell$-ETF since $\textup{rank}(G)=1$.
\begin{enumerate}
\item[(i)] Let $r\ne 0$ and $s=t=0$. Then $G=MM^{\dagger_\ell}=r^{p^\ell+1}AA^{\dagger_\ell}=-r^{p^\ell+1}I$. So $a=c=-r^{p^\ell+1}$, $b=0$ and $\textup{rank }G=n$.
\item[(ii)] Let $s\ne 0$ and $r=t=0$. Then $G=s^{p^\ell+1}I$. So $a=c=s^{p^\ell+1}$, $b=0$ and $\textup{rank }G=n$.
\item[(iii)] Let $s\ne 0$, $t\ne 0$ and $r=0$. Then $M=sI+tJ$, and
\[
G=MM^{\dagger_\ell}=s^{p^\ell+1}I+(2st^{p^\ell}+nt^{p^\ell+1})J.
\]
This gives
\[
G_{ii}=s^{p^\ell+1}+2st^{p^\ell}+nt^{p^\ell+1}\quad\mbox{and}\quad G_{ij}=2st^{p^\ell}+nt^{p^\ell+1}
\]
for all $1\le i\ne j\le n$. If $2st^{p^\ell}+nt^{p^\ell+1}=0$, then $G=s^{p^\ell+1}I$, and so $a=c=s^{p^\ell+1}$, $b=0$ and $\textup{rank}(G)=n$.\\
Suppose that $2st^{p^\ell}+nt^{p^\ell+1}\ne 0$. Since
\[
G^2=s^{2p^\ell+2}I+[2s^{p^\ell+1}(2st^{p^\ell}+nt^{p^\ell+1})+n(2st^{p^\ell}+nt^{p^\ell+1})^2]J,
\]
and $G^2=cG$, we have $cs^{p^\ell+1}=s^{2p^\ell+2}$
and
\[
c(2st^{p^\ell}+nt^{p^\ell+1})=2s^{p^\ell+1}(2st^{p^\ell}+nt^{p^\ell+1})+n(2st^{p^\ell}+nt^{p^\ell+1})^2.
\]
Then we have $c=s^{p^\ell+1}=2s^{p^\ell+1}+(2st^{p^\ell}+nt^{p^\ell+1})n$, which implies that
\[
|G|=(s^{p^\ell+1})^{2n}\left(1+\frac{2st^{p^\ell}+nt^{p^\ell+1}}{s^{p^\ell+1}}\cdot n\right)=0.
\]
So $G$ cannot be the Gram matrix of an $(a, b, c)_\ell$-ETF.
\item[(iv)] Let $r\ne 0$, $t\ne 0$ and $s=0$. Then $M=rA+tJ$, and
\[
G=MM^{\dagger_\ell}=-r^{p^\ell+1}I+rt^{p^\ell}AJ+r^{p^\ell}tJA^{\dagger_\ell}+nt^{p^\ell+1}J.
\]
This gives
\[
G_{ii}=-r^{p^\ell+1}+rt^{p^\ell}\left(\sum_{k=1}^na_{ik}\right)+tr^{p^\ell}\left(\sum_{k=1}^na_{ik}^{p^\ell}\right)+nt^{p^\ell+1}
\]
for all $1\le i\le n$, and
\[
G_{ij}=nt^{p^\ell+1}+rt^{p^\ell}\left(\sum_{k=1}^na_{ik}\right)+tr^{p^\ell}\left(\sum_{k=1}^na_{jk}^{p^\ell}\right)
\]
for each $1\le i\ne j\le n$. Let $\displaystyle\theta=\sum_{k=1}^na_{ik}$ for all $1\le i\le n$. Then we have $AJ=\theta J$, $JA^{\dagger_\ell}=\theta^{p^\ell}J$, and
\[
G=-r^{p^\ell+1}I+(rt^{p^\ell}\theta+tr^{p^\ell}\theta^{p^\ell}+nt^{p^\ell+1})J.
\]
If $rt^{p^\ell}\theta+tr^{p^\ell}\theta^{p^\ell}+nt^{p^\ell+1}=0$, then  $G=-r^{p^\ell+1}I$, and thus $a=c=-r^{p^\ell+1}$, $b=0$ and $\textup{rank}(G)=n$.

Suppose that $rt^{p^\ell}\theta+tr^{p^\ell}\theta^{p^\ell}+nt^{p^\ell+1}\ne0$. Since
\[
G^2=r^{2p^\ell+2}I+[n(rt^{p^\ell}\theta+tr^{p^\ell}\theta^{p^\ell}+nt^{p^\ell+1})^2-2r^{p^\ell+1}(rt^{p^\ell}\theta+tr^{p^\ell}\theta^{p^\ell}+nt^{p^\ell+1}]J,
\]
and $G^2=cG$, we have $c=-r^{p^\ell+1}=n(rt^{p^\ell}\theta+tr^{p^\ell}\theta^{p^\ell}+nt^{p^\ell+1})-2r^{p^\ell+1}$, that is, $r^{p^\ell+1}=n(rt^{p^\ell}\theta+tr^{p^\ell}\theta^{p^\ell}+nt^{p^\ell+1})$. In this case,
\[
|G|=(-r^{p^\ell+1})^n\left(1+\frac{n(rt^{p^\ell}\theta+tr^{p^\ell}\theta^{p^\ell}+nt^{p^\ell+1})}{-r^{p^\ell+1}}\right)=0.
\]
So $G$ cannot be the Gram matrix of an $(a, b, c)_\ell$-ETF.
\item[(v)] Let $r\ne 0$, $s\ne 0$ and $t=0$. Then $M=rA+sI$, and
\begin{align*}
G=MM^{\dagger_\ell}&=-r^{p^\ell+1}I+rs^{p^\ell}A+sr^{p^\ell}A^{\dagger_\ell}+s^{p^\ell+1}I\\&=(s^{p^\ell+1}-r^{p^\ell+1})I+rs^{p^\ell}A+sr^{p^\ell}A^{\dagger_\ell}.
\end{align*}
This implies
\[
G_{ii}=(s^{p^\ell+1}-r^{p^\ell+1})+rs^{p^\ell}a_{ii}+sr^{p^\ell}a_{ii}^{p^\ell}
\]
for $1\le i\le n$, and
\[
G_{ij}=rs^{p^\ell}a_{ij}+sr^{p^\ell}a_{ji}^{p^\ell}
\]
for every $1\le i\ne  j\le n$. Since
\begin{align*}
G_{ij}G_{ji}&=(rs^{p^\ell}a_{ij}+sr^{p^\ell}a_{ji}^{p^\ell})(rs^{p^\ell}a_{ji}+sr^{p^\ell}a_{ij}^{p^\ell})\\&=r^2s^2(s^{2p^\ell-2}a_{ij}a_{ji}+r^{2p^\ell-2}(a_{ij}a_{ji})^{p^\ell}+s^{p^\ell-1}r^{p^\ell-1}(a_{ij}+a_{ji})),
\end{align*}
in order for $G$ to be the Gram matrix of an $(a, b, c)_\ell$-ETF, $G_{ij}G_{ji}$ must be zero for all $i\ne j$, that is,
\[
s^{2p^\ell-2}a_{ij}a_{ji}+r^{2p^\ell-2}(a_{ij}a_{ji})^{p^\ell}+s^{p^\ell-1}r^{p^\ell-1}(a_{ij}+a_{ji})=0
\]
for all $i\ne j$. Let $a_{ii}=\alpha$. If $G_{ii}=(s^{p^\ell+1}-r^{p^\ell+1})+\alpha=0$, then $G$ cannot be the Gram matrix of an $(a, b, c)_\ell$-ETF since $G=0$. If not, then $G=((s^{p^\ell+1}-r^{p^\ell+1})+\alpha)I$, and $a=c=(s^{p^\ell+1}-r^{p^\ell+1})+rs^{p^\ell}\alpha+sr^{p^\ell}\alpha^{p^\ell}$, $b=0$ and $\textup{rank}(G)=n$.
\item[(vi)] Let $r\ne 0$, $s\ne 0$ and $t\ne 0$. Then $M=rA+sI+tJ$, and
\begin{align*}
G=MM^{\dagger_\ell}&=(s^{p^\ell+1}-r^{p^\ell+1})I+rs^{p^\ell}A+sr^{p^\ell}A^{\dagger_\ell}+rt^{p^\ell}AJ+tr^{p^\ell}JA\\&+(st^{p^\ell}+ts^{p^\ell}+nt^{p^\ell+1})J.
\end{align*}
This gives
\begin{align*}
G_{ii}&=s^{p^\ell+1}-r^{p^\ell+1}+rs^{p^\ell}a_{ii}+sr^{p^\ell}a_{ii}^{p^\ell}+rt^{p^\ell}\left(\sum_{k=1}^na_{ik}\right)\\&+tr^{p^\ell}\left(\sum_{k=1}^na_{ik}\right)+(st^{p^\ell}+ts^{p^\ell}+nt^{p^\ell+1})
\end{align*}
for all $1\le i\le n$, and
\[
G_{ij} =	rs^{p^\ell}a_{ij}+sr^{p^\ell}a_{ji}^{p^\ell}+rt^{p^\ell}\left(\sum_{k=1}^na_{ik}\right)+tr^{p^\ell}\left(\sum_{k=1}^na_{jk}\right)+(st^{p^\ell}+ts^{p^\ell}+nt^{p^\ell+1})
\]
for all $1\le i\ne j\le n$.
In order for $G$ to be the Gram matrix of an $(a, b, c)_\ell$-ETF, $G_{ij}G_{ji}$ must be $0$, that is, for all $1\le i\ne j\le n$,
\[
rs^{p^\ell}a_{ij}+sr^{p^\ell}a_{ji}^{p^\ell}+rt^{p^\ell}\left(\sum_{k=1}^na_{ik}\right)+tr^{p^\ell}\left(\sum_{k=1}^na_{jk}\right)+(st^{p^\ell}+ts^{p^\ell}+nt^{p^\ell+1})=0
\]
or
\[
rs^{p^\ell}a_{ji}+sr^{p^\ell}a_{ij}^{p^\ell}+rt^{p^\ell}\left(\sum_{k=1}^na_{jk}\right)+tr^{p^\ell}\left(\sum_{k=1}^na_{ik}\right)+(st^{p^\ell}+ts^{p^\ell}+nt^{p^\ell+1})=0.
\]
Let $\displaystyle \delta=rs^{p^\ell}a_{ii}+sr^{p^\ell}a_{ii}^{p^\ell}+rt^{p^\ell}\left(\sum_{k=1}^na_{ik}\right)+tr^{p^\ell}\left(\sum_{k=1}^na_{ik}\right)$.
If $\delta=-(s^{p^\ell+1}-r^{p^\ell+1}+st^{p^\ell}+ts^{p^\ell}+nt^{p^\ell+1})=0$, then $G=0$, and it cannot be the Gram matrix of an $(a, b, c)_\ell$-ETF. If not, then
\[
G= (s^{p^\ell+1}-r^{p^\ell+1}+st^{p^\ell}+ts^{p^\ell}+nt^{p^\ell+1}+\delta)I,
\]
and thus $a=c=(s^{p^\ell+1}-r^{p^\ell+1}+st^{p^\ell}+ts^{p^\ell}+nt^{p^\ell+1}+\delta)$, $b=0$ and $\textup{rank}(G)=n$.
\end{enumerate}
Since we have considered all six nontrivial cases, our calculation completes the proof.
\end{proof}

\section{Numerical Results}
Galois self-dual codes have been extensively studied in the context of constacyclic codes~\cite{DPI:23} \cite{DPI:25} \cite{FL:24} \cite{FZ17} \cite{FL:21}~\cite{FL:23}~\cite{MC:21}. In this section, we present explicit constructions of $\ell$-Galois equiangular tight frames obtained from $\ell$-Galois self-dual constacyclic codes over $\mathbb{F}_q$, for $q=3^e,5^e$, or $7^e$ where $1\le e\le 6$. Using the characterization of Galois self-duality of constacyclic codes provided in~\cite{FZ17}, we have searched for Galois self-dual codes with the aid of the Magma. For each such a code with generator matrix of the form $[I \mid A]$, we have applied the construction method described in Theorem~\ref{GETF} to obtain the Gram matrices of $(a, b, c)_\ell$-Galois ETFs. We have constructed $1827$ Galois-self-dual constacyclic codes, over $\mathbb{F}_{3^e}$, $\mathbb{F}_{5^e}$ and $\mathbb{F}_{7^e}$ where $1\le e\le 6$ with lengths $4$ to $24$. To focus on nontrivial constructions, we exclude the cases in which $r = 0$ or at most one of the parameters $r, s, t$ is nonzero, that is, we have excluded case (i), (ii) and (iii) in Theorem~\ref{GETF}. Among those codes we have constructed, the $97$ codes have satisfied the conditions (iv), (v) or (vi) in Theorem~\ref{GETF} for some $r, s, t\in\mathbb{F}_q$. We gives some examples in what follows.

\begin{ex}
Let $p=3$.
\begin{enumerate}
\item Let $\mathcal{C}_1$ be a code over $\mathbb{F}_{3^4}$ with generator matrix
    \[
    G=[I|A]=\left[\begin{array}{cccc}
    1 & 0 & \zeta^{4} & 0 \\0 & 1 & 0 & \zeta^{4}
    \end{array}\right],
    \]
    where $\zeta$ is a primitive element of $\mathbb{F}_{3^4}$. The code $\mathcal{C}_1$ is a $[4, 2, 2]$ $2$-Galois self-dual code. Let $r=2,\,t=\zeta^{10}$ and $s=0$, which corresponds to case (iv) of Theorem~\ref{GETF}. We have $\sum_{k=1}^2a_{ik}=\zeta^{4}$ for all $1\le i\le 2$, and
    \[
    rt^3\zeta^{4}+tr^3(\zeta^{4})^3+2t^4=0.
    \]
    Thus, we have that
    \[
    \mathcal{G}=MM^{\dagger_2}=2,
    \]
    which is the Gram matrix of a $(2, 0, 2)_2$-ETF for $\mathbb{F}_{3^4}^2$.
\item Let $\mathcal{C}_2$ be a code over $\mathbb{F}_{3^4}$ with generator matrix $G=[I|A]$, where $A$ is given as
    \[
    \left[\begin{array}{ccccccccc}
    \zeta^{60} & 1 & \zeta^{60} & 2 & \zeta^{20} & 1 & \zeta^{60} & 2 & \zeta^{20} \\
    1 & 0 & 2 & \zeta^{20} & 1 & \zeta^{60} & 2 & \zeta^{20} & 1 \\
    \zeta^{60} & 2 & \zeta^{60} & 1 & \zeta^{60} & 2 & \zeta^{20} & 1 & \zeta^{60} \\
    2 & \zeta^{20} & 1 & 0 & 2 & \zeta^{20} & 1 & \zeta^{60} & 2 \\
    \zeta^{20} & 1 & \zeta^{60} & 2 & \zeta^{60} & 1 & \zeta^{60} & 2 & \zeta^{20} \\
    1 & \zeta^{60} & 2 & \zeta^{20} & 1 & 0 & 2 & \zeta^{20} & 1 \\
    \zeta^{60} & 2 & \zeta^{20} & 1 & \zeta^{60} & 2 & \zeta^{60} & 1 & \zeta^{60} \\
    2 & \zeta^{20} & 1 & \zeta^{60} & 2 & \zeta^{20} & 1 & 0 & 2 \\
    \zeta^{20} & 1 & \zeta^{60} & 2 & \zeta^{20} & 1 & \zeta^{60} & 2 & \zeta^{60}
    \end{array}\right]
    \]
    and $\zeta$ represents a primitive element of $\mathbb{F}_{3^4}$. Then the code $\mathcal{C}_2$ is a $[18, 9, 2]$ $2$-Galois self-dual code. Let $r=\zeta^{22}$, $s=\zeta^{27}$ and $t=0$. So we consider case (v) of Theorem~\ref{GETF}. Then $s^{9}a_{ii}+sa_{ii}^{9}=0$ for every $1\le i\le 9$,
    \[
    s^{16}(a_{ij}a_{ji})+r^{16}(a_{ij}a_{ji})^{9}+s^{8}r^{8}(a_{ij}+a_{ji})=0
    \]
    for every $1\le i\ne j\le 9$, and $s^{10}-r^{10}=2\ne 0$. Hence, the matrix
    \[
    \mathcal{G}=MM^{\dagger_2}=2I
    \]
    is the Gram matrix of a $(2, 0, 2)_2$-ETF for $\mathbb{F}_{3^4}^{9}$.
\item Let $\mathcal{C}_3$ be a code over $\mathbb{F}_{3^2}$ with generator matrix
    \[
    G=[I|A]=\begin{bmatrix}
    1 & 0 & 1 & 1\\0 & 1 & 1 & 2
\end{bmatrix}.
    \]
    Then the code $\mathcal{C}_3$ is a $[4, 2, 3]$ $1$-Galois self-dual code. Take $r=s=2$ and $t=1$, which is case (vi) of Theorem~\ref{GETF}. We have
    \[
    rs^{3}a_{ii}+sr^{3}a_{ii}^{3}+rt^{3}\left(\sum_{k=1}^2a_{ik}\right)+tr^{3}\left(\sum_{k=1}^2a_{ik}\right)^{3}=1
    \]
    for $1\le i\le 2$,
    \[
    (s^{4}-r^{4})+st^{3}+ts^{3}+2t^{3}+1=1\in\mathbb{F}_q^{(4)},
    \]
    and
    \[
    rs^{3}a_{ij}+sr^{3}a_{ji}^{3}+rt^{3}(\sum_{k=1}^2a_{ik})+tr^{3}(\sum_{k=1}^2a_{jk})^{3}+st^{3}+ts^{3}+2t^{4}=0
    \]
    for each $1\le i\ne j\le 2$. So we conclude that
    \[
    \mathcal{G}=MM^{\dagger_2}=I
    \]
    is the Gram matrix of a $(1, 0, 1)_1$-ETF for $\mathbb{F}_{3^2}^2$.
\end{enumerate}
\end{ex}

\begin{ex}
Let $p=5$.
\begin{enumerate}
\item Let $\mathcal{C}_1$ be a code over $\mathbb{F}_{5^4}$ with generator matrix
    \[
    G=[I|A]=\left[\begin{array}{cccccc}
    1 & 0 & 0 & \zeta^{52} & 0 & 0 \\0 & 1 & 0 & 0 & \zeta^{52} & 0\\0 & 0 & 1 & 0 & 0 & \zeta^{52}
    \end{array}\right],
    \]
    where $\zeta$ is a primitive element of $\mathbb{F}_{5^4}$. The code $\mathcal{C}_1$ is a $[6, 3, 2]$ $1$-Galois self-dual code. Take $r=t=\zeta^2$ and $s=0$, which corresponds to the case (iv) of Theorem~\ref{GETF}. We have $\sum_{k=1}^3a_{ik}=\zeta^{52}$ for all $1\le i\le 3$, and
    \[
    rt^5\zeta^{52}+tr^5(\zeta^{52})^5+3t^6=0.
    \]
    This implies that
    \[
    \mathcal{G}=MM^{\dagger_1}=\zeta^{324},
    \]
    is the Gram matrix of a $(\zeta^{324}, 0, \zeta^{324})_1$-ETF for $\mathbb{F}_{5^4}^3$.
\item Let $\mathcal{C}_2$ be a code over $\mathbb{F}_{5^6}$ with generator matrix $G=[I|A]$, where $A$ is given as
    \[{
    \left[\begin{array}{ccccccc}
    2 & \zeta^{8694} & \zeta^{9576} & \zeta^{15498} & \zeta^{11592} & \zeta^{13482} & \zeta^{4788} \\
    \zeta^{8694} & \zeta^{3276} & \zeta^{2142} & \zeta^{9198} & \zeta^{7434} & \zeta^{9702} & \zeta^{5670} \\
    \zeta^{9576} & \zeta^{2142} & \zeta^{6426} & \zeta^{11466} & \zeta^{10836} & \zeta^{15246} & \zeta^{11592} \\
    \zeta^{15498} & \zeta^{9198} & \zeta^{11466} & \zeta^{6300} & \zeta^{3654} & \zeta^{9198} & \zeta^{7686} \\
    \zeta^{11592} & \zeta^{7434} & \zeta^{10836} & \zeta^{3654} & \zeta^{6426} & \zeta^{9954} & \zeta^{9576} \\
    \zeta^{13482} & \zeta^{9702} & \zeta^{15246} & \zeta^{9198} & \zeta^{9954} & \zeta^{3276} & \zeta^{882} \\
    \zeta^{4788} & \zeta^{5670} & \zeta^{11592} & \zeta^{7686} & \zeta^{9576} & \zeta^{882} & 2
    \end{array}\right]},
    \]
    and $\zeta$ denotes a primitive element of $\mathbb{F}_{5^6}$. Then the code $\mathcal{C}_2$ is a $[14, 7, 7]$ $3$-Galois self-dual code. Let $r=1$, $s=\zeta^{63}$ and $t=0$. So we consider case (v) of Theorem~\ref{GETF}. Then $s^{5^3}a_{ii}+sa_{ii}^{5^3}=\alpha=0$ for every $1\le i\le 7$,
    \[
    s^{2\cdot 5^3-2}a_{ij}a_{ji}+s^{5^3-1}a_{ij}^{5^3+1}+s^{5^3-1}a_{ji}^{5^3+1}+a_{ij}^{5^3}a_{ji}^{5^3}=0
    \]
    for every $1\le i\ne j\le 7$, and $s^{5^3}-1+\alpha=2\ne 0$. Hence, the matrix
    \[
    \mathcal{G}=MM^{\dagger_3}=\zeta^{4284}I
    \]
    is the Gram matrix of a $(\zeta^{4284}, 0, \zeta^{4284})_3$-ETF for $\mathbb{F}_{5^6}^{7}$.
\item Let $\mathcal{C}_3$ be a code over $\mathbb{F}_{5^4}$ with generator matrix
    \[
    G=[I|A]=\begin{bmatrix}
        1 & 0 & 1 & \zeta^{546}\\0 & 1 & \zeta^{546} & 4
    \end{bmatrix}
    \]
    where $\zeta$ is a primitive element of $\mathbb{F}_{5^4}$. Then the code $\mathcal{C}_3$ is a $[4, 2, 2]$ $2$-Galois self-dual code. Take $r=s=\zeta^{17}$ and $t=\zeta^{329}$, which is case (vi) of Theorem~\ref{GETF}. Then we have
    \[
    rs^{25}a_{ii}+sr^{25}a_{ii}^{25}+rt^{25}\left(\sum_{k=1}^2a_{ik}\right)+tr^{25}\left(\sum_{k=1}^2a_{ik}\right)^{25}=\zeta^{208}
    \]
    for $1\le i\le 2$,
    \[
    (s^{26}-r^{26})+st^{25}+ts^{25}+2t^{26}+\zeta^{208}=\zeta^{208}\in\mathbb{F}_q^{(26)},
    \]
    and
    \[
    rs^{25}a_{ij}+sr^{25}a_{ji}^{25}+rt^{25}(\sum_{k=1}^2a_{ik})+tr^{25}(\sum_{k=1}^2a_{jk})^{25}+st^{25}+ts^{25}+2t^{26}=0
    \]
    for each $1\le i\ne j\le 2$. So the matrix
    \[
    \mathcal{G}=MM^{\dagger_2}=\zeta^{208}I
    \]
    is the Gram matrix of a $(\zeta^{208}, 0, \zeta^{208})_2$-ETF for $\mathbb{F}_{5^4}^2$.
\end{enumerate}
\end{ex}

\begin{ex}
Let $p=7$.
\begin{enumerate}
\item Let $\mathcal{C}$ be a code over $\mathbb{F}_{7^4}$   with generator matrix
    \[
    G=[I|A]=\left[\begin{array}{cccccccc}
    1 & 0 & 0 & 0 & \zeta^{450} & 0 & 0 & 0\\0 & 1 & 0 & 0 & 0 & \zeta^{450} & 0 & 0\\0 & 0 & 1 & 0 & 0 & 0 & \zeta^{450} & 0\\0 & 0 & 0 & 1 & 0 & 0 & 0 & \zeta^{450}
    \end{array}\right],
    \]
    where $\zeta$ is a primitive element of $\mathbb{F}_{7^4}$. Then the code $\mathcal{C}$
    is a $[8, 4, 2]$ $1$-Galois self-dual code over $\mathbb{F}_{7^4}$.
    Let $r=\zeta$, $t=\zeta$ and $s=0$, that is, we consider case (iv) of Theorem~\ref{GETF}. Then 
    $\sum_{k=1}^4a_{ik}=\zeta^{450}$ for every $1\le i\le 4$, and 
    \[
    rt^{7}\zeta^{450}+tr^{7}(\zeta^{450})^{7}+4t^{8}=0.
    \]
    Therefore, we obtain
    \[
    \mathcal{G}=MM^{\dagger_1}=\zeta^{1208}I
    \]
    which is the Gram matrix of a $(\zeta^{1208}, 0, \zeta^{1208})_1$-ETF for $\mathbb{F}_{7^4}^{4}$.
\item Let $\mathcal{C}_2$ be a code over $\mathbb{F}_{7^4}$ with generator matrix $G=[I|A]$, where $A$ is given as
    \[
    \left[\begin{array}{ccccccccccc}
    \zeta^{1800} & 2 & \zeta^{1400} & 5 & \zeta^{200} & 2 & \zeta^{1400} & 5 & \zeta^{200} & 2 & \zeta^{1400} \\
    2 & \zeta^{1000} & 5 & \zeta^{200} & 2 & \zeta^{1400} & 5 & \zeta^{200} & 2 & \zeta^{1400} & 5 \\
    \zeta^{1400} & 5 & \zeta^{1800} & 2 & \zeta^{1400} & 5 & \zeta^{200} & 2 & \zeta^{1400} & 5 & \zeta^{200} \\
    5 & \zeta^{200} & 2 & \zeta^{1000} & 5 & \zeta^{200} & 2 & \zeta^{1400} & 5 & \zeta^{200} & 2 \\
    \zeta^{200} & 2 & \zeta^{1400} & 5 & \zeta^{1800} & 2 & \zeta^{1400} & 5 & \zeta^{200} & 2 & \zeta^{1400} \\
    2 & \zeta^{1400} & 5 & \zeta^{200} & 2 & \zeta^{1000} & 5 & \zeta^{200} & 2 & \zeta^{1400} & 5 \\
    \zeta^{1400} & 5 & \zeta^{200} & 2 & \zeta^{1400} & 5 & \zeta^{1800} & 2 & \zeta^{1400} & 5 & \zeta^{200} \\
    5 & \zeta^{200} & 2 & \zeta^{1400} & 5 & \zeta^{200} & 2 & \zeta^{1000} & 5 & \zeta^{200} & 2 \\
    \zeta^{200} & 2 & \zeta^{1400} & 5 & \zeta^{200} & 2 & \zeta^{1400} & 5 & \zeta^{1800} & 2 & \zeta^{1400} \\
    2 & \zeta^{1400} & 5 & \zeta^{200} & 2 & \zeta^{1400} & 5 & \zeta^{200} & 2 & \zeta^{1000} & 5 \\
    \zeta^{1400} & 5 & \zeta^{200} & 2 & \zeta^{1400} & 5 & \zeta^{200} & 2 & \zeta^{1400} & 5 & \zeta^{1800}
    \end{array}\right]
    \]
    and $\zeta$ represents a primitive element of $\mathbb{F}_{7^4}$. Then the code $\mathcal{C}_2$ is a $[22, 11, 4]$ $2$-Galois self-dual code. Let $r=\zeta^{16}$, $s=\zeta^{41}$ and $t=0$. So we consider case (v) of Theorem~\ref{GETF}. Then $s^{49}a_{ii}+sa_{ii}^{49}=0$ for every $1\le i\le 11$,
    \[
    s^{96}(a_{ij}a_{ji})+r^{96}(a_{ij}a_{ji})^{49}+s^{48}r^{48}(a_{ij}+a_{ji})=0
    \]
    for every $1\le i\ne j\le 11$, and $s^{50}-r^{50}=\zeta^{2250}\ne 0$. Thus, the matrix
    \[
    \mathcal{G}=MM^{\dagger_2}=\zeta^{2250}I
    \]
    is the Gram matrix of a $(\zeta^{2250}, 0, \zeta^{2250})_2$-ETF for $\mathbb{F}_{7^4}^{11}$.
    \item Let $\mathcal{C}_3$ be a code over $\mathbb{F}_{7^4}$ with generator matrix
    \[
    G=[I|A]=\begin{bmatrix}
    1 & 0 & 1 & \zeta^{1000}\\0 & 1 & \zeta^{1000} & 6
    \end{bmatrix}
    \]
    where $\zeta$ denotes a primitive element of $\mathbb{F}_{7^4}$.
    Then the code $\mathcal{C}_3$ is a $[4, 2, 3]$ $2$-Galois self-dual code. Take $r=s=1$ and $t=6$, which is case (vi) of Theorem~\ref{GETF}. We have
    \[
    rs^{49}a_{ii}+sr^{49}a_{ii}^{49}+rt^{3}\left(\sum_{k=1}^2a_{ik}\right)+tr^{49}\left(\sum_{k=1}^2a_{ik}\right)^{49}=\zeta^{600}
    \]
    for $1\le i\le 2$,
    \[
    (s^{50}-r^{50})+st^{49}+ts^{49}+2t^{49}+\zeta^{600}=\zeta^{600}\in\mathbb{F}_q^{(50)},
    \]
    and
    \[
    rs^{49}a_{ij}+sr^{49}a_{ji}^{49}+rt^{49}(\sum_{k=1}^2a_{ik})+tr^{49}(\sum_{k=1}^2a_{jk})^{49}+st^{49}+ts^{49}+2t^{50}=0
    \]
    for each $1\le i\ne j\le 2$. So we conclude that
    \[
    \mathcal{G}=MM^{\dagger_2}=\zeta^{600}I
    \]
    is the Gram matrix of a $(\zeta^{600}, 0, \zeta^{600})_2$-ETF for $\mathbb{F}_{7^4}^2$.
\end{enumerate}
\end{ex}

\begin{center}
\begin{figure}[h]
\centering
\begin{tabular}{|cccccccc|}
\hline
$q$ & $n$ & $k$ & $d$ & $\ell$ & $(r, s, t)$ & $a=c$ & \textbf{A} \\
\hline
\hline
$3^4$ & 6 & 3 & 3 & 2 & $(2,\zeta^{45},0)$ & $\zeta^{60}$ & $\begin{pmatrix} \zeta^{20} & 2 & \zeta^{20} \\ 2 & 0 & 1 \\ \zeta^{20} & 1 & \zeta^{20} \end{pmatrix}$ \\[2em]

$3^6$ & 8 & 4 & 3 & 3 & $(\zeta^{13},\zeta^{27},0)$ & $\zeta^{252}$ & $\begin{pmatrix} 1 & 0 & 1 & 0 \\ 0 & 1 & 0 & 1 \\ 1 & 0 & 2 & 0 \\ 0 & 1 & 0 & 2 \end{pmatrix}$ \\[2.5em]

$5^2$ & 10 & 5 & 4 & 1 & $(1,\zeta^{3},0)$ & 2 & $\begin{pmatrix} 3 & 3 & 4 & 2 & 1 \\ 3 & 1 & 2 & 1 & 3 \\ 4 & 2 & 3 & 3 & 4 \\ 2 & 1 & 3 & 1 & 2 \\ 1 & 3 & 4 & 2 & 3 \end{pmatrix}$ \\[3em]

$5^4$ & 14 & 7 & 5 & 2 & $(4,\zeta^{324},0)$ & $\zeta^{260}$ & \tiny{$\begin{pmatrix} 3 & 4 & \zeta^{286} & \zeta^{130} & \zeta^{182} & \zeta^{26} & 3 \\ 4 & 0 & \zeta^{338} & 0 & 1 & 0 & \zeta^{442} \\ \zeta^{286} & \zeta^{338} & \zeta^{104} & 1 & 1 & 4 & \zeta^{286} \\ \zeta^{130} & 0 & 1 & 0 & 4 & 0 & \zeta^{338} \\ \zeta^{182} & 1 & 1 & 4 & \zeta^{520} & \zeta^{130} & \zeta^{182} \\ \zeta^{26} & 0 & 4 & 0 & \zeta^{130} & 0 & 1 \\ 3 & \zeta^{442} & \zeta^{286} & \zeta^{338} & \zeta^{182} & 1 & 3 \end{pmatrix}$} \\[4em]

$7^2$ & 8 & 4 & 5 & 1 & $(\zeta^{21},\zeta^{25},0)$ & 3 & $\begin{pmatrix} 6 & 4 & 6 & 3 \\ 4 & 4 & 1 & 1 \\ 6 & 1 & 3 & 4 \\ 3 & 1 & 4 & 1 \end{pmatrix}$ \\[2.5em]

$7^4$ & 24 & 12 & 6 & 2 & $(3,\zeta^{425},0)$ & $\zeta^{2250}$ & {$12\times 12$ matrix $A_1$ given in Figure~\ref{7_4_12_12_gmatrix}} \\
\hline
\end{tabular}
\caption{ETFs from $\ell$-Galois constacyclic codes}
\label{fig:code_results}
\end{figure}
\end{center}

So far, we have provided a total of 9 examples, one for each of the cases (iv), (v), and (vi), over fields of characteristic 3, 5, and 7, respectively. In addition to the the nine examples above, we have obtained several nontrivial constructions of ETFs, primarily from case (v), using various codes over different base fields. Some of these examples are listed in Figure~\ref{fig:code_results}.

In Figure~\ref{fig:code_results}, $q$ specifies the size of the base field over which the linear code is constructed, while $n$, $k$ and $d$ represent the code parameters (length, dimension, and minimum distance), respectively. We denote by $\ell$, the $\ell$-Galois inner product parameter used in the construction, and $(r, s, t)$ are the parameters appearing in Theorem~\ref{GETF}. The scalar $a = c$ corresponds to the diagonal entries of the Gram matrix, and $A$ refers to the right part of the generator matrix $G$.

\begin{figure}[ht]
\[
A=\left[\tiny\begin{array}{cccccccccccc}
\zeta^{1300} & \zeta^{2350} & \zeta^{1150} & 6 & 1 & \zeta^{100} & \zeta^{900} & \zeta^{1950} & \zeta^{1900} & \zeta^{450} & 6 & \zeta^{1050} \\[1em]
\zeta^{2350} & \zeta^{950} & \zeta^{2050} & \zeta^{450} & \zeta^{900} & \zeta^{1500} & \zeta^{1100} & \zeta^{550} & \zeta^{1700} & \zeta^{700} & \zeta^{550} & \zeta^{2250} \\[1em]
\zeta^{1150} & \zeta^{2050} & 6 & 2 & \zeta^{550} & \zeta^{2150} & \zeta^{200} & \zeta^{150} & 0 & \zeta^{1450} & \zeta^{50} & \zeta^{2300} \\[1em]
6 & \zeta^{450} & 2 & \zeta^{1650} & \zeta^{950} & \zeta^{1550} & \zeta^{1100} & \zeta^{1300} & \zeta^{1250} & \zeta^{350} & \zeta^{450} & \zeta^{1100} \\[1em]
1 & \zeta^{900} & \zeta^{550} & \zeta^{950} & \zeta^{250} & \zeta^{450} & \zeta^{100} & 4 & \zeta^{2050} & 6 & \zeta^{850} & 6 \\[1em]
\zeta^{100} & \zeta^{1500} & \zeta^{2150} & \zeta^{1550} & \zeta^{450} & 4 & \zeta^{1550} & \zeta^{2050} & \zeta^{1750} & \zeta^{850} & 0 & 5 \\[1em]
\zeta^{900} & \zeta^{1100} & \zeta^{200} & \zeta^{1100} & \zeta^{100} & \zeta^{1550} & \zeta^{2200} & \zeta^{2350} & \zeta^{650} & 2 & \zeta^{1050} & \zeta^{650} \\[1em]
\zeta^{1950} & \zeta^{550} & \zeta^{150} & \zeta^{1300} & 4 & \zeta^{2050} & \zeta^{2350} & \zeta^{1400} & 2 & 4 & \zeta^{2150} & \zeta^{600} \\[1em]
\zeta^{1900} & \zeta^{1700} & 0 & \zeta^{1250} & \zeta^{2050} & \zeta^{1750} & \zeta^{650} & 2 & \zeta^{700} & \zeta^{300} & \zeta^{50} & \zeta^{1550} \\[1em]
\zeta^{450} & \zeta^{700} & \zeta^{1450} & \zeta^{350} & 6 & \zeta^{850} & 2 & 4 & \zeta^{300} & \zeta^{1300} & \zeta^{550} & \zeta^{2300} \\[1em]
6 & \zeta^{550} & \zeta^{50} & \zeta^{450} & \zeta^{850} & 0 & \zeta^{1050} & \zeta^{2150} & \zeta^{50} & \zeta^{550} & \zeta^{1950} & \zeta^{2150} \\[1em]
\zeta^{1050} & \zeta^{2250} & \zeta^{2300} & \zeta^{1100} & 6 & 5 & \zeta^{650} & \zeta^{600} & \zeta^{1550} & \zeta^{2300} & \zeta^{2150} & \zeta^{2300}
\end{array}\right]
\]
\caption{Generator matrix of a $[24, 12, 6]$ code over $\mathbb{F}_{7^4}$ inducing a $(\zeta^{1450}, 0, \zeta^{1450})_2$-ETF}
\label{7_4_12_12_gmatrix}
\end{figure}

\section{Conclusion}
In this paper, we have introduced the frame theory over finite fields with respect to the Galois inner product as a generalization of the Euclidean or Hermitian inner products. We have introduced the notion of Galois frames and some related concepts, including Galois equiangular tight frames. We have established some properties of specific classes of Galois ETFs, and presented methods for constructing Galois ETFs using Galois self-dual codes. Our Galois ETFs have parameters of the form $(a, 0, a)$ because they are arising from self-dual codes. It might be also interesting to explore Galois ETFs with general parameters arising from other combinatorial objects as future work.

\subsection*{Acknowledgments}
This work of Jon-Lark Kim was supported in part by the BK21 FOUR (Fostering Outstanding Universities for Research) funded by the Ministry of Education(MOE, Korea) and National Research Foundation of Korea(NRF) under Grant No. 4120240415042. 

\newpage

\end{document}